\def\K{{\mathbb K}} 
\newcommand{\softO}[1]{\mathchoice{\tilde{O}\left(#1\right)}{\tilde{O}(#1)}{\tilde{O}(#1)}{\tilde{O}(#1)}}
\newcommand{\res}{\text{\rm Res}}
\newcommand{\rev}{\text{\rm rev}}
\newcommand{\trsp}[1]{{#1}^\mathsf{T}}
\newcommand{\Sx}{{\ensuremath{S_{\!x}}}} 
\newcommand{\Sxk}[1]{{\ensuremath{S_{\!x}^{\!(#1)}}}} 
\newcommand{\Sy}{{\ensuremath{S_{\!y}}}} 
\newcommand{\Syk}[1]{{\ensuremath{S_{\!y}^{\!(#1)}}}} 
\newcommand{\Syp}{{\ensuremath{S_{\!y}\!\!'}}} 
\newcommand{\Sxp}{{\ensuremath{S_{\!x}\!\!'}}}
\newcommand{\Tx}{{\ensuremath{T_{\!x}}}}
\newcommand{\SA}{{\ensuremath{S_{\!A}}}} 
\newcommand{\SAe}{{\ensuremath{S^*_{\!A}}}} 
\newcommand{\SRe}{{\ensuremath{S^*_{\!R}}}}
\newcommand{\Al}{{\ensuremath{\mathbb A}}}
\newcommand*{\rem}{%
  \nonscript\mskip-\medmuskip\mkern5mu%
  \mathbin{\operator@font rem}\penalty900\mkern5mu%
  \nonscript\mskip-\medmuskip
}
\title{Elimination ideal and bivariate resultant over finite fields
}
\author{
{Gilles Villard} 
}
\institute{  
{CNRS, U. Lyon, Inria, ENS de Lyon, UCBL, Laboratoire LIP UMR5668, France}	
}
\crefname{Item}{Step}{Steps}
\numberwithin{theorem}{section}
\numberwithin{proposition}{section}
\numberwithin{lemma}{section}
\numberwithin{remark}{section}
\numberwithin{corollary}{section}
\numberwithin{example}{section}
\begin{document}

\maketitle 

\thispagestyle{empty}

\begin{abstract}
A new algorithm is presented for computing the largest degree invariant factor of the Sylvester 
matrix~(with respect either to $x$ or $y$)  associated to two polynomials $a$ and $b$ in $\mathbb F_q[x,y]$ which have no non-trivial common divisors. The algorithm is randomized 
of the Monte Carlo type and requires $O((de)^{1+\epsilon}\log(q) ^{1+o(1)})$ bit operations, 
where $d$ an $e$ respectively bound the input degrees in $x$ and in $y$. It follows that 
the same complexity estimate is valid for computing: a generator of the elimination 
ideal $\langle a,b \rangle \cap \mathbb F_q[x]$ (or $\mathbb F_q[y]$), as soon 
as the polynomial system $a=b=0$ has not roots at infinity; the resultant of $a$ and $b$ 
when they are sufficiently generic, especially so that the Sylvester matrix has a unique non-trivial invariant factor.  
Our approach is to use the reduction of the problem to a problem of minimal polynomial in the 
 quotient algebra $\mathbb F_q[x,y]/\langle a,b \rangle$. 
 By proposing a new method based on structured polynomial matrix division for computing with the elements 
 in the quotient, we manage to improve the best known complexity bounds. 
\end{abstract}

%
%

\section{Introduction} \label{sec:intro}

Given two polynomials $a,b \in \K[x,y]$, where $\K$ is a commutative field, their resultant $\res _y(a,b)$ with respect to $y$ is the determinant of the associated Sylvester matrix $\Sy$ 
over $\K[x]$~\cite[Ch.\,6]{GaGe99}. 
Computing this determinant in 
quasi-linear time with respect to the input/output size 
is still beyond our reach in the general case. 

\smallskip

In this paper we consider the relaxed problem which is to compute the last (of largest degree) invariant factor   of $\Sy$, in the case of a finite field~$\K=\mathbb F_q$ with $q$ elements. We consider $a$ and $b$  of 
 $x$-degree at most~$d$ and $y$-degree at most $e$ in $\mathbb F_q[x,y]$, having no non-trivial common divisors. For any $\epsilon >0$, 
there exist a randomized Monte Carlo algorithm which solves the problem using a quasi-linear number of 
$O((de)^{1+\epsilon}\log(q) ^{1+o(1)})$ bit operations. 

\smallskip

The last invariant factor of~$\Sy$ is a specific divisor of the resultant. 
If the polynomial system $a=b=0$ has no roots at infinity with respect to $y$ (the $y$-leading coefficients of $a$ and $b$ are coprime), then 
it  gives central informations on the affine solutions. It is indeed 
a generator of the elimination ideal $\langle a,b\rangle \cap \K[x]$~\cite[Ch.\,2]{CoLiOSh05}. We also have, in particular, the fact that this invariant factor gives
the resultant when $a$ and $b$ are sufficiently generic~(\cref{sec:resultant}). 
(Genericity is considered in the Zariski sense: a property is generic if it holds except on a hypersurface of the parameter space.)

\smallskip 

Our approach over finite fields is inspired by and goes further than the major steps taken with: the change of order algorithm of Poteaux and Schost for triangular sets and radical ideals~\cite{PS13}; the algorithm of van der Hoeven and Lecerf, 
which computes the resultant of generic polynomials with respect to the total degree~\cite{HoeLec21a}. In the bivariate case, both these works provides solutions in quasi-linear expected time in the input/output size for the first time (\cite{PS13} treats general multivariate cases). They are part of the 
 same long line of research which reduces elimination problems to linear algebra [\citealp{Laz81}; \citealp[Sec.\,2.4 \& 3.6]{CoLiOSh05}], and especially to the computation of minimal 
polynomials in quotient algebras~\cite{lazard1992,Shoup99}. It is this path that we are pursuing.

\smallskip 
\noindent 
{\em The role of minimal polynomials.}
 Let $I=\langle a,b\rangle$ be the (zero-dimensional) ideal generated by $a$ 
and $b$ in~$\K[x,y]$, and $\Al=\K[x,y]/I$ be the associated quotient algebra. 
We remind in  \cref{sec:ideals} that the last invariant invariant factor of the Sylvester matrix $\Sy$ can be computed 
as the minimal polynomial $\mu$ of the multiplication by $x$ in $\Al$, under 
the condition of absence of roots at infinity~\cite{Lazard85}. 
This is how we proceed. The condition on the behaviour at infinity is 
met for slightly modified polynomials not preventing us from computing the target invariant 
factor (\cref{sec:shifts}).

For efficiency, the minimal polynomial problem is itself reduced to a power projection problem~\cite[Sec.\,6]{Kal00-2}
(a more complete list of references is given later in this introduction).
Given a  linear form $\ell$ in the dual of~$\Al$ over~$\K$, the minimal polynomial in $\Al$ is computed as the one 
of the linearly generated sequence~$\{\ell(x^i \bmod I)\}_{i\geq 0}$ over~$\K$.
The application of a random linear form preserves the recursion which is sought in $\Al$~\cite{Wie86} (\cref{sec:invfact}). 
As observed by Shoup~\cite{Shoup94}, the power projection problem is dual to the modular 
composition problem~\cite{BK78}.  We finaly rely on Kedlaya and Umans' approach to address those two latter issues~\cite{KU11} in quasi-linear time over finite fields. As we will now see, this is made possible by a new algorithm we propose for arithmetic operations modulo the ideal.   

\smallskip 

\noindent 
{\em First result.}
One of the bottlenecks in above strategy is to perform arithmetic operations in $\Al$~\cite{Hoeven17}, even if only to compute 
the multiplication of two polynomials or the powers of $x$ that need to be projected modulo the ideal $I$.  This is where a  main aspect of our contribution lies. In~\cite{PS13}, the special case of triangular sets is considered. That is, in our context, when either $a$ or $b$ is univariate. On the other hand, the generic resultant algorithm of \cite{PS13} relies on Gr\" obner bases techniques, and the normal form algorithm modulo~$I$ of \cite{HoevenLarrieu2019}. 

We instead use polynomial matrix division~\cite[Sec. 6.3]{Kailath80}. 
Viewing a polynomial~$f$ in $\K[x,y]$ as a vector with entries in $\K[x]$, we reduce its $x$-degree using division by the polynomial Sylvester matrix~$\Sy$; 
let us also specify that we may need to construct a Sylvester matrix from multiples of $a$ and~$b$ if the dimensions do not 
match (\cref{sec:division}). By definition of the Sylvester matrix, the remainder of this division gives a new polynomial 
in the coset $f+I$. By means of a similar division after the swich of the roles of~$x$ and $y$, this leads 
to a normal form algorithm modulo $I$, up to a regularity assumption related to roots at infinity~(\cref{lem:rootsinfty} and \cref{prop:division}). 
This algorithm is algebraic  and deterministic for arbitrary fields. If $f$ has $x$-degree at most $\delta$ 
and   $y$-degree at most $\eta$, 
 then it uses 
$\softO{(d+\delta)(e+ \eta)}$ arithmetic operations (\cref{prop:division}). 
A Sylvester matrix is a Toeplitz-like matrix~\cite{BiPa94}. Our cost bound is based on fast structured matrix arithmetic which is discussed in \cref{sec:structmat}. In particular, 
the normal form algorithm allows  multiplication in $\K[x,y]/\langle a,b\rangle$ using 
$\softO{de}$ operations when the leading coefficients of $a$ and $b$ are sufficiently 
generic (\cref{lem:rootsinfty}). 
In the case of the total degree, for generic polynomials~$a,b$ with $\deg a \geq \deg b$, the algorithm of \cite{HoevenLarrieu2019}
costs $\softO{(\deg a)(\deg b)}$, after the precomputation of a concise Gr\"obner basis representation of the ideal using 
$\softO{(\deg  a)^2}$ operations. So in terms of their assumptions the two algorithms are complementary~(\cref{sec:polydiv}).

\smallskip 
\noindent
{\em Extension of Kedlaya and Umans' techniques for the power projections.}
As soon as the normal form algorithm is available, hence the arithmetic operations in $\Al$, it is possible to develop the general strategy
of Shoup~\cite{Shoup94,Shoup99} for the computation of modular power projections, coupled by duality with the algorithm 
of Kedlaya and Umans for modular composition~\cite{KU11} (in this latter reference, the case of a univariate ideal~$I$ in $\mathbb F_q[x]$ is treated). 
This is what has been generalized in both~\cite{PS13} 
and \cite{HoeLec21a}, with respective shapes of the ideal $I$ that we have seen above. 
We proceed in the same way, and integrate the new division algorithm into this overall process: (i)
reduction of  $f \in \mathbb F_q[x]$ modulo~$I$, which is considered as modular composition
according to $f(g(x)) \bmod I$ with $g=x$; modular composition relies on multivariate multipoint evaluation 
following \cite[Thm.\,3.1]{KU11}; (ii) using the transposition principle~\cite[Thm.\,13.20]{BCS97}, the power projections 
are obtained~(\cref{sec:Tdivision}). Since the degree of the resultant of $a$ and $b$ with respect to $y$ is at most~$2de$, 
it is sufficient to be able to compute (i) $f$ modulo $I$ for $\deg f < 4de$ and (ii) $\{\ell(x^i \bmod I)\}_{0\leq i< 4de}$, in order to deduce the minimal polynomial of the sequence (which is a divisor of the resultant). 
We establish in \cref{sec:composition} that (i) and (ii) can be perfomed within our target cost bound over a finite field using bit operations.

\smallskip 
\noindent
{\em Last invariant factor and elimination ideal $\langle a,b\rangle \cap \mathbb F_q[x]$.}
In the presence of roots at infinity, we use a random transformation of $a$ and $b$ into two other polynomials which 
meet the condition of regularity for computing normal forms, and still make it possible to obtain the initial last invariant factor. This is presented in \cref{sec:shifts}.  The general complexity bound for the computation of the last invariant factor is given in \cref{sec:invfact} from that of modular power projection.  As a consequence of Lazard's structure theorems for bivariate ideals~\cite{Lazard85}, the latter polynomial is 
a multiple of the minimal polynomial~$\mu$ of the multiplication by $x$ in $\Al$, and  
both polynomials coincide if the system $a=b=0$ has no roots at infinity (\cref{lem:laz}). 
Under this condition, what we have done so far allows in 
\cref{sec:resultant} to compute $\mu$, that is a generator of the elimination ideal~$I \cap \mathbb F_q[x]$.

\smallskip 
\noindent
{\em Comparison to previous work.} Given an arbitrary field, the bivariate resultant can be computed using 
$O(de^2)$ arithmetic operations~\cite[Chap.\,11]{GaGe99}. 

Over a finite field, 
the approach of \cite{HoeLec21a} allows quasi-linear bit cost; for generic polynomials with respect to the total degree, and any $\epsilon >0$, this leads to the complexity bound $O(((\deg a)(\deg b)\log q)^{1+\epsilon})+\softO{(\deg a)^2\log q}$ when $\deg a \geq \deg b$.
(The soft-$O$ notation $\softO{c}$ captures an additional logarithmic factor $O(c\log ^kc)$ for a positive $k$.)  
Our algorithm covers this case, in particular. Genericity ensures that there are no roots at infinity and a 
unique invariant factor (see \cref{sec:resultant}), and we obtain a comparable asymptotic bound.  
Considering degree conditions on the variables individually we treat a larger class of problems
and with weaker assumptions.
For polynomials of $x$-degree $d$ and $y$-degree~$e$, 
  we compute the resultant in quasi-linear  time when the Sylvester matrix $\Sy$ has a unique non-trivial invariant factor.

Now, in a general way, in all cases as soon as there are no roots at infinity, our approach allows to compute 
a generator of the elimination ideal $I\cap \mathbb F_q[x]$. This is treated in \cref{sec:resultant}. 
We are not aware of any previous method whose cost would be quasi-linear over finite fields under the same 
assumptions. The complexity of this problem is indeed related to that of the resultant
and bivariate lexicographic Gr\"obner bases~\cite{Dah22}. 
In particular, for $a$ and $b$ of total degree at most $d$, we arrive at the bound   
$O(d^{2+\epsilon}\log(q) ^{1+o(1)})$, while previous estimates are $\softO{d^3\log q}$~\cite{LMS13}.

 We use bit complexity. The bivariate resultant problem using an algebraic model of computation 
 is a harder problem. To our knowledge, a quasi-linear complexity bound is not achievable at this time. 
We may refer to [\citealp{Lec17}; \citealp{Vil18,PSV22}] and to the pointers found there. 
It it also important to note that
quasi-linear time algorithms are given for bivariate polynomial systems with integer coefficients 
in \cite{MS16b}.

\smallskip 
\noindent
{\em Minimal polynomials and power projections.} 
 We give here some additional references from which the results we use largely inherit. 
The adaptation of numerical matrix methods to the finite field setting has started with the solution of 
sparse linear systems in mind~\cite{COS86,Wie86}. These methods result in  projections of the powers of the involved matrix for computing its minimal polynomial, as evidenced by Wiedemann's approach~\cite{Wie86}.
The link is to be made with the use of power projections for computing minimal polynomials 
in quotient algebras, using the trace map in \cite{Thi89,RiBo91} and general projections 
in \cite{Shoup94,KaSh98,Shoup99}. 
(We have indeed a multiplication endomorphism in the quotient.)

The duality between the power projection problem and the modular composition one is observed 
in \cite{Shoup94}.  

In the context of polynomial system solving, for which the literature is vast,
 we may refer to the use of the trace map in \cite{Alonso96,Rou99,Gonzalez-Vega1999}, or of arbitrary linear forms  in \cite{BSS03}.
 Structured matrices and duality are applied to multivariate polynomial problems in \cite{MoPa2000}. 
Multivariate powers projections are considered in \cite{Shoup99,Kal00-2}, especially for minimal polynomials, and are exploited 
for the computation of special resultants in \cite{BFSS06}, and to the 
change of order of variables for triangular sets in~\cite{PaSc06}. 
The link between the change of ordering and linear algebra is also beneficial using power projections 
of a multiplication matrix in \cite{FaMo11,faugere2014sub}, and particularly in order to take advantage of 
sparsity~\cite{FaMo17}, which brings us back to Wiedemann's algorithm.

Following \cite{Lazard85}, the Sylvester matrix $\Sy$ (or $\Sx$) is a polynomial matrix that we manipulate as such. This may be seen as  
 working in a $\K[y]$-module rather than in a $\K$-vector space in order 
 to represent the quotient algebra $\Al$~\cite[Sec.\,3.10]{Jac85}, and implement the operations on its elements. A similar direction has been taken in \cite{BNS22} for a change or ordering 
 of Gr\"obner bases algorithm. 

 In linear algebra with implicitly represented matrices, an open problem 
 is to compute the characteristic polynomial in essentially the same 
time as for the minimal polynomial~[Sec.\,3]\cite{Kal00-2}. This applies in particular to sparse or structured matrices. 
The question of computing the bivariate resultant in essentially the same 
time as for the last invariant factor of the Sylvester matrix appears to be similar to Kaltofen's open problem.  

\smallskip 

\noindent
{\em Model of computation.}
The normal form algorithm for polynomials in $\K[x,y]$ modulo $\langle a,b\rangle$
and its transpose are  presented using an algebraic model (\cref{sec:division}), and work e.g. with 
computation trees~\cite[Sec.\,4.4]{BCS97}. Complexity bounds correspond to 
numbers of arithmetic operations performed in $\K$.

The application of Kedlaya and Uman's techniques in \cref{sec:composition} and threfore 
the last invariant factor computation in \cref{sec:invfact} rely on a RAM bit complexity model.  
We consider that arithmetic operations in $\mathbb F_q$ can be done in time $\softO{\log q}$, 
and that the RAM can produce a random element uniformly distributed in $\mathbb F_q$ with the same cost. 

\smallskip 

\noindent
{\em Notations.} Throughout the paper we consider two polynomials $a,b \in \K[x,y]$, of degrees $d_a$ and $d_b$ in $x$, and $e_a$ and~$e_b$ in $y$, respectively. We will use the notations 
$d=\max\{d_a,d_b\}$ and $e=\max\{e_a,e_b\}$. 
The associated Sylvester matrices with respect to~$x$ and $y$ are 
$\Sx \in \K[y]^{n_x \times n_x}$ and $\Sy \in \K[x]^{n_y \times n_y}$,  
with dimensions $n_x=d_a+d_b$ and $n_y=e_a+e_b$. 
The resultants $\res_x(a,b)\in \K[y]$ and $\res_y(a,b)\in \K[x]$, of $a$ and $b$ with respect to $x$ 
and $y$, are the respective determinants of $\Sx$ and $\Sy$~\cite[Chap.\,6]{GaGe99}. 
We assume that $a$ and $b$ have no non-trivial common divisors, hence both $\Sx$ and $\Sy$ are non-singular.
We focus on computations in relation to $\res_y(a,b)=\det \Sy$ (the conclusions would be unchanged in relation to $\res_x(a,b)$).

We use expressions such as ``$x$-degree'' or ``$y$-leading coefficient'' to indicate the variable 
which is concerned, and use $\deg_x$ and $\deg_y$ in formulas when bivariate polynomials are involved. Subscripts for example in $\K[x,y]_{<(d,n_y)}$ indicate degree bounds in $x$ and $y$, 
and $\K[x]_{d}^n$ is the set of polynomials of degree $d$.

We are often led to manipulate reversals of polynomials. For $k\geq 0$, we define 
the reversal of a polynomial $f\in \K[x]$ with respect to $k$ as $\rev _k (f)=x^kf(1/x)$; 
by default, if $k$ is not specified, the reversal is taken with respect to the degree of the polynomial.  This is generalized to polynomial matrices viewed as matrix polynomials, 
we mean with matrix coefficients.

The polynomials in $\K[x,y]$ are identified with the (column) vectors of their 
coefficients, using dimensions which will be clear from the context. 
For example, given $f=f_0(x)+f_1(x)y+\ldots f_d(x) y^d$ and $n\geq d+1$, $v_y(f)\in \K[x]^n$ denotes 
the vector 
 $\trsp{[0~\ldots ~0~ f_d~ \ldots ~ f_0 ]}$.

%
%

\section{Polynomial matrices, resultant and bivariate ideals} \label{sec:ideals}

We give the basic notions and results we need in the rest of the text concerning the relations between the resultant of two polynomials and the ideal they generate.   
 As univariate polynomial matrix, the Sylvester matrix $\Sy$ is unimodularly equivalent to a matrix 
${\text{\rm diag}}(s_1, \ldots, s_n)\in \K[x]^{n_y\times n_y}$ in Smith normal form, where~$s_n$ is the invariant factor of largest degree.
We are not able to always compute the resultant within the cost target. We are, however, able to compute the 
last invariant factor (\cref{cor:invfact}).

Using the structure theory of finitely generated modules, this last invariant factor can be seen as the minimal polynomial of a linear transformation in a finite dimensional $\K$-vector space~\cite[Sec.\,3.10]{Jac85}. 
Such a formalism has been exploited occasionally for the efficient computation of general matrix normal forms~\cite{villard1997fast,Sto00}. Concerning Sylvester matrices and in the broader context of polynomial system solution, this is related to the use 
of a multiplication map on a quotient algebra~\cite{Laz81}. 

Let $I=\langle a,b\rangle$ be the (zero-dimensional) ideal generated by $a$ and $b$ in $\K[x,y]$, and $\Al=\K[x,y]/I$ be the associated quotient algebra. We especially rely on the following results, which are immediate 
consequences of Lazard's theorem~\cite{Lazard85}.

\begin{lemma}[{\normalfont \cite[Thm.\,4]{Lazard85}}] \label{lem:laz}
The last invariant factor 
of $\Sy$ is a multiple of the minimal polynomial of the multiplication by $x$ in $\Al$, 
both polynomials coincide if the $y$-leading coefficients of $a$ and $b$ are coprime in $\K[x]$.  
\end{lemma}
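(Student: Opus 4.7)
The plan is to exploit that $\Al$ is a finite-dimensional $\K$-vector space on which $\K[x]$ acts by multiplication, to identify $\mu$ with the monic generator of the elimination ideal $I\cap\K[x]$, and to compare this with $s_{n_y}$ through a $\K[x]$-module presentation derived from $\Sy$. Let $M$ denote the free $\K[x]$-module of polynomials in $\K[x,y]$ of $y$-degree less than $n_y$, with basis $1, y, \ldots, y^{n_y-1}$, and let $N \subseteq M$ be the $\K[x]$-submodule generated by $y^i a$ for $0 \leq i < e_b$ and $y^j b$ for $0 \leq j < e_a$; these are precisely the columns of $\Sy$. By the Smith normal form of $\Sy$, $M/N \cong \bigoplus_{i=1}^{n_y} \K[x]/(s_i)$, so the annihilator of $M/N$ --- equivalently, the minimal polynomial of the multiplication-by-$x$ action on $M/N$ --- is exactly $s_{n_y}$.

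First I would establish the divisibility $\mu \mid s_{n_y}$ in complete generality. Note that $I \cap \K[x]$ coincides with the annihilator of $\Al$ as a $\K[x]$-module: $p \in I \cap \K[x]$ iff $p\cdot v \in I$ for every $v \in \K[x,y]$, and this annihilator is generated by $\mu$. Since the class of $1 \in M$ in $M/N$ is killed by $s_{n_y}$, one has $s_{n_y} = s_{n_y}\cdot 1 \in N \subseteq I$, hence $s_{n_y} \in I \cap \K[x] = (\mu)$, and $\mu \mid s_{n_y}$.

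For the equality when the $y$-leading coefficients $\alpha(x)$ and $\beta(x)$ of $a$ and $b$ are coprime in $\K[x]$, I would invoke Lazard's Theorem~4, which under this hypothesis --- equivalent to the absence of roots at infinity with respect to $y$ --- identifies the reduction map $M \to \Al$ as a $\K[x]$-module isomorphism $M/N \cong \Al$. Surjectivity is a direct Bezout manipulation: with $u,v \in \K[x]$ such that $u\alpha + v\beta = 1$, the combination $u y^{k-e_a} a + v y^{k-e_b} b$ rewrites $y^k$, for $k \geq n_y$, modulo $I$ as a polynomial of strictly smaller $y$-degree, and iterating yields $\K[x,y] = M + I$. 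The subtler injectivity statement $M \cap I = N$ is the core of Lazard's result. Through this isomorphism, the minimal polynomial of multiplication by $x$ on $\Al$ is transported to that on $M/N$, so $\mu = s_{n_y}$.

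The main obstacle in this strategy is the injectivity $M \cap I = N$ under the coprimality hypothesis --- that no $\K[x]$-linear relations among $1, y, \ldots, y^{n_y-1}$ modulo $I$ arise beyond those already visible in the columns of $\Sy$. This is precisely the statement supplied by Lazard's theorem, which I would cite rather than reprove.
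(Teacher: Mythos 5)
Your proof is correct, and its strategy is close to the paper's but packaged differently. For the divisibility direction, the paper passes through the Hermite normal form: the last invariant factor $s_{n_y}$ is a multiple of the last diagonal entry $h$ of the Hermite form (a consequence of $s_{n_y}\Sy^{-1}$ being polynomial), and the last Hermite column, namely $\trsp{[0\,\cdots\,0\;h]}$, exhibits $h$ as an element of $I\cap\K[x]=(\mu)$. You instead argue directly on the Smith form: since $s_{n_y}$ annihilates $\operatorname{coker}\Sy = M/N$, in particular it kills the class of $1$, giving $s_{n_y}\cdot 1\in N\subseteq I$ and hence $\mu\mid s_{n_y}$. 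Your version is slightly more economical since it dispenses with the Hermite form entirely for this half; the paper introduces it here because items (i) and (ii) of Lazard's Theorem~4, which it invokes for the equality, are phrased in Hermite/Gr\"obner terms (the Hermite form can be brought to Smith form by unimodular row operations without altering the diagonal, and the last diagonal entry is a reduced Gr\"obner basis element generating $I\cap\K[x]$). Your equality argument goes through the isomorphism $M/N\cong\Al$, with Bezout giving surjectivity and Lazard giving the injectivity $M\cap I=N$; that is a legitimate equivalent route, though the injectivity statement you delegate to Lazard is really his Lemma~7 (which the paper itself cites in the proof of its normal-form proposition) rather than Theorem~4 as you label it. Aside from that small imprecision in attribution, the argument is sound and the two proofs are variants of the same module-theoretic reading of $\Sy$.
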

\begin{proof} 
The last invariant factor is a multiple of the last diagonal entry $h\in \K[x]$ of the Hermite form, where the latter is lower triangular and obtained by unimodular column transformations. The polynomial $h$ 
is in $I \cap \K[x]$ (combinations 
of columns of $\Sy$ are seen as combinations of $a$ and $b$), which gives the first assertion.  When the leading coefficients are coprime, the divisibility property~(i) in \cite[Thm.\,4]{Lazard85} shows that the Hermite form 
of $\Sy$ can be brought to Smith form using unimodular (row) transformations, without modifying the diagonal. From (ii) in \cite[Thm.\,4]{Lazard85}, the last invariant factor is therefore an element of a reduced 
Gr\"obner 
basis of $I$, 
and as polynomial in $\K[x]$ it generates the elimination ideal $I \cap \K[x]$. 
\qed
\end{proof}

The condition on the leading coefficients of $a$ and $b$ in \cref{lem:laz} is the fact that the system $a=b=0$ has no roots at infinity with respect to $y$. In general, the resultant and the last invariant factor may have terms coming from both
the affine variety and the behaviour at infinity~\cite[Chap.\,3]{CoLiOSh05}.
To still be able to reduce the invariant factor computation to a minimal polynomial problem the assumption 
of \cref{lem:laz} will hold after a random modification of the input polynomials (see \cref{sec:shifts}).

The resultant can be deduced from \cref{lem:laz} in particular  when 
the Smith form of $\Sy$ has a unique non-trivial invariant factor and there are no roots at infinity. 
This corresponds to certain situations in which the ideal $I$ has a shape basis~\cite{GTZ88,BMMT94}.

\begin{lemma}[{\normalfont \cite[Thm.\,4]{Lazard85}}] \label{lem:shape}
The $y$-leading coefficients of $a$ and $b$ are coprime in $\K[x]$ and there exist two polynomials 
$\mu, \lambda \in \K[x]$ such that $I=\langle \mu(x), y-\lambda(x)\rangle$  if and only if, up to a non-zero element in $\K$, the resultant $res_y(a,b)$ 
is the minimal polynomial $\mu$ of the multiplication by $x$ in $\Al$. 
\end{lemma}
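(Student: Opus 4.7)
\emph{Proof plan.} My plan is to establish each direction separately, with both resting on \cref{lem:laz} and the classical comparison $\dim_\K \Al \leq \deg_x \res_y(a,b)$, which becomes an equality precisely when the $y$-leading coefficients of $a$ and $b$ are coprime (i.e.\ the system $a=b=0$ has no roots at infinity with respect to $y$), as follows from Lazard's structure theorem~\cite{Lazard85}.

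For the $(\Leftarrow)$ direction, I would first use the shape basis to identify $\Al$ with $\K[x]/\langle\mu\rangle$ via $y \mapsto \lambda(x)$; this immediately gives $\dim_\K\Al = \deg\mu$ and shows that the minimal polynomial of multiplication by $x$ on $\Al$ is exactly $\mu$. Then \cref{lem:laz} together with the coprimality hypothesis yields $s_n = \mu$ for the last invariant factor of $\Sy$, and coprimality also forces $\deg_x\res_y = \dim_\K\Al = \deg\mu$. Since $\res_y$ is the product of the invariant factors $s_1 \mid \cdots \mid s_n$ of $\Sy$ and $\deg s_n = \deg_x\res_y$ already, the remaining factors $s_1,\dots,s_{n-1}$ must be units, so $\res_y$ and $\mu$ coincide up to a nonzero element of $\K$.

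For the $(\Rightarrow)$ direction, I would start from $\res_y(a,b) = c\mu$ with $c \in \K^*$ and combine the universal bound $\deg\mu \leq \dim_\K\Al$ with $\dim_\K\Al \leq \deg_x\res_y = \deg\mu$ to force equalities throughout. The outer equality $\dim_\K\Al = \deg_x\res_y$ then delivers the coprimality of the $y$-leading coefficients, while the inner equality $\deg\mu = \dim_\K\Al$ ensures that $1, x, \dots, x^{\deg\mu-1}$ is a $\K$-basis of $\Al$; in particular $y \equiv \lambda(x) \pmod{I}$ for some $\lambda \in \K[x]$ of degree less than $\deg\mu$. Having $\mu(x), y-\lambda(x) \in I$ gives the inclusion $\langle\mu(x), y-\lambda(x)\rangle \subseteq I$, and since both ideals define a quotient of $\K$-dimension $\deg\mu$, they must coincide.

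The main obstacle will be to invoke the dimension comparison cleanly: the equality $\dim_\K\Al = \deg_x\res_y$ holding exactly under coprimality is what converts the degree count into structural information---trivializing the lower invariant factors in the forward direction, and extracting the absence of roots at infinity in the backward direction from the bare identity $\res_y = c\mu$.
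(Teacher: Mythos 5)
Your proof is correct, but it takes a genuinely different route from the paper's. The paper argues directly on the Hermite normal form of $\Sy$ over $\K[x]$ via Lazard's Theorem~4: in the forward direction the shape basis forces the Hermite form to have a single non-trivial diagonal entry equal to $\mu$, hence the determinant; in the converse direction the last Hermite diagonal entry $h$ lies in $I\cap\K[x]$, so $\mu\mid h\mid\det\Sy=c\mu$ forces all other diagonal entries to be units, from which coprimality and the shape basis both follow via Lazard's structure result. You instead run a dimension count: the chain $\deg\mu \le \dim_\K\Al \le \deg_x\res_y(a,b)$, the second inequality being an equality exactly when the $y$-leading coefficients are coprime, together with $\res_y=\prod s_i$ and $s_n=\mu$ (from \cref{lem:laz} under coprimality). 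This avoids explicit Hermite-form manipulations and, once one grants the dimension formula $\dim_\K\Al = \deg_x\res_y(a,b)$ under coprimality (itself a consequence of the same theorem of Lazard that the paper invokes, though not stated as such in the paper), it is arguably cleaner and more self-contained given \cref{lem:laz}; the paper's approach has the advantage of proceeding purely through the polynomial-matrix normal-form language that the rest of the paper is built on. One small remark: your labels $(\Leftarrow)$/$(\Rightarrow)$ are swapped relative to the statement's ordering, but the mathematics in each direction is sound.
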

\begin{proof} 
From \cite[Thm.\,4]{Lazard85}, under the hypothesis $I=\langle \mu(x), y-\lambda(x)\rangle$ and using the coprimeness, we know that the Hermite  form of $\Sy$ has a unique non-trivial diagonal entry, which 
is $\mu$. Therefore, the latter is also the determinant of~$\Sy$, up to the normalization to a monic polynomial in the Hermite form. 

Conversely, the last element $h\in \K[x]$ of the diagonal of the Hermite form of $\Sy$ is in $I$ . Hence $h$ must be a multiple of $\mu$, and of 
the resultant by assumption. It follows that $h=\res_y(a,b)=\mu$, and all the other diagonal entries of the Hermite form are 
equal to $1$. This proves that the  $y$-leading coefficients of $a$ and $b$ are coprime since otherwise the 
first diagonal entry of the Hermite form would be a non-constant polynomial in $\K[x]$. Item 
(ii) \cite[Thm.\,4]{Lazard85} allows to conclude. 
\qed
\end{proof}

Concerning the links bewteen the resultant and the associated ideal, the reader may especially refer to \cite{CD21}, where a general multivariate version of \cref{lem:shape} is given.

\begin{example} \label{example1}
 The Sylvester matrix may have a unique non-trivial invariant factor (that our algorithm will compute) even though there are roots at infinity. With $\K=\mathbb F_2$, take $a=\left(x +1\right) y +x^{2}$ and $\left(x +1\right) y^{2}+y$. We have $I=\langle x^2,y\rangle$, and the Hermite normal form of $\Sy$ is 
 $$
\Sy U=\left[\begin{array}{ccc}
x +1 & 0 & 0 
\\
 1 & x +1 & 0 
\\
 0 & x^{2} & x^2\left(x +1\right) 
\end{array}\right],
 $$
 with $U$ unimodular. None of the arguments used for \cref{lem:laz,lem:shape} apply: the Hermite form cannot be brought to Smith form using unimodular row operations without modifying the diagonal (as used in the proof of \cref{lem:laz}), and the form is not either trivial (proof of \cref{lem:shape}). 
 The last invariant factor of $\Sy$ is $\res _y(a,b)=x^{2} \left(x +1\right)^{3}$. 
 \qed
\end{example}

We now characterize the existence of roots at infinity using column reducedness of polynomial matrices~\cite[Sec.\,6.3,\,p.384]{Kailath80}, which is used in next sections.  
Let $S$ be a matrix in $\K[x]^{n\times n}$  whose column $j$ has degree $d_j$. We call (column) leading (matrix) coefficient 
of $S$ the matrix in $\K^{n\times n}$ whose entry $(i,j)$ is the coefficient of degree $d_j$ of the entry $(i,j)$ of $S$.  We manipulate non-singular univariate polynomial matrices, and say that 
such a  matrix is column reduced if its leading coefficient is invertible. 

\begin{lemma}\label{lem:rootsinfty}
$\Sx$ is column reduced if and only if, the $y$-leading coefficients of $a$ and $b$ are relatively prime and at least one of latter polynomials in $\K[x]$ has maximal degree $d_a$ or $d_b$, respectively. 
\end{lemma}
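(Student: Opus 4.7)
The plan is to compute the (column) leading matrix coefficient of $\Sx$ explicitly, identify it as a classical scalar Sylvester matrix built from the $y$-leading coefficients of $a$ and $b$, and then apply a nonsingularity criterion for Sylvester matrices of polynomials of prescribed formal degrees.

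First, I would write $a = \sum_{i=0}^{d_a} a_i(y) x^i$ and $b = \sum_{i=0}^{d_b} b_i(y) x^i$ and note that the columns of $\Sx$ are the coefficient vectors in $\K[y]^{n_x}$ of $x^j a$ (for $0 \le j < d_b$) and of $x^j b$ (for $0 \le j < d_a$). By the definition of $e_a, e_b$, each $a$-column has $y$-degree exactly $e_a$ and each $b$-column has $y$-degree exactly $e_b$, so these are the column degrees used to define the leading matrix coefficient. Extracting the coefficient of $y^{e_a}$ in each entry of an $a$-column replaces $a_i(y)$ by the coefficient of $x^i$ in $\alpha(x) := [y^{e_a}] a \in \K[x]$, which has degree at most $d_a$; analogously on the $b$-side one gets $\beta(x) := [y^{e_b}] b \in \K[x]$, of degree at most $d_b$. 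Reading off the block structure, the leading matrix coefficient of $\Sx$ is then exactly the classical Sylvester matrix of $\alpha$ and $\beta$, regarded as polynomials of formal degrees $d_a$ and $d_b$.

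Next, I would characterize when that scalar Sylvester matrix is nonsingular via the linear map it represents, $(u,v) \mapsto u\alpha + v\beta$ on $\K[x]_{<d_b} \times \K[x]_{<d_a}$: writing $g = \gcd(\alpha,\beta)$, $\alpha = g\alpha_1$, $\beta = g\beta_1$, its kernel is the set of pairs $(t\beta_1, -t\alpha_1)$ with $t \in \K[x]$, and a nontrivial such $t$ meets the prescribed degree constraints iff $\deg \alpha_1 < d_a$ and $\deg \beta_1 < d_b$. Combined with $\deg \alpha \le d_a$ and $\deg \beta \le d_b$, the negation rewrites cleanly as: $g$ is a nonzero constant and $\deg \alpha = d_a$ or $\deg \beta = d_b$. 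By the paper's definition of column reducedness, this is exactly the stated equivalence.

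The only genuinely delicate point will be this formal-degree nonsingularity criterion: one must resist appealing to the usual ``$\gcd = 1$ iff nonsingular'' statement, because here $\alpha$ or $\beta$ may have actual degree strictly less than $d_a$ or $d_b$ — precisely the ``root at infinity'' phenomenon the lemma is capturing. The $\gcd$-based kernel analysis above sidesteps this cleanly; the remainder is a routine unwinding of the block structure of $\Sx$.
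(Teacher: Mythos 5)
Your proof is correct, and the first half mirrors the paper exactly: both you and the paper observe that the column leading matrix of $\Sx$ is precisely the $(d_a{+}d_b)\times(d_a{+}d_b)$ Sylvester-type matrix built from the $y$-leading coefficients $s=[y^{e_a}]a$ and $t=[y^{e_b}]b$, written with \emph{formal} degrees $d_a$ and $d_b$. Where you diverge is in how that matrix's nonsingularity is characterized. The paper first argues from the top row of the leading matrix that column reducedness forces $\deg s=d_a$ or $\deg t=d_b$, then (after a WLOG column permutation) peels off the $d_b-\deg t$ extra shifted columns, which are in echelon position above the genuine Sylvester matrix of $s$ and $t$ at their \emph{true} degrees, and finally invokes the classical ``nonsingular iff coprime'' fact. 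You instead prove a formal-degree nonsingularity criterion in one shot: writing the kernel of $(u,v)\mapsto u\,s+v\,t$ on $\K[x]_{<d_b}\times\K[x]_{<d_a}$ via $g=\gcd(s,t)$ and the cofactors $s_1,t_1$, and observing that a nonzero kernel element exists iff $\deg t_1<d_b$ and $\deg s_1<d_a$, whose negation is exactly ``$g$ constant and ($\deg s=d_a$ or $\deg t=d_b$)''. This route avoids both the degree-forcing observation and the WLOG, handles the two degree conditions symmetrically, and is arguably cleaner and more self-contained than the paper's reduction to the true-degree Sylvester matrix; the paper's version, on the other hand, keeps the appeal to the classical resultant criterion explicit. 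Both are valid; the difference is one of bookkeeping rather than substance.
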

\begin{proof}
Let $s,t \in \K[x]$ 
 be the $y$-leading coefficients of $a,b$, with respective degrees $d_s$ and $d_t$.
The columns of the leading coefficient of $\Sx$ are given by the vectors in $\K^{d_a+d_b}$ associated to 
$$x^{d_b-1}s,x^{d_b-2}s,\ldots,s,x^{d_a-1}t,x^{d_a-2}t,\ldots,t.$$
If $\Sx$ is column reduced then the first row of its leading matrix is non-zero and either $d_s = d_a$ or $d_t=d_b$. Let's say that~$d_s=d_a$ (up to a column permutation). The leading coefficient of $\Sx$ is therefore  
given by 
\begin{equation}\label{eq:lemsylv}
x^{d_b-1}s,x^{d_b-2}s,\ldots,x^{d_t}s,x^{d_t-1}s,x^{d_t-2}s,\ldots,s,x^{d_s-1}t,x^{d_s-2}t,\ldots,t,
\end{equation}
and we see that its rank is that of the Sylvester matrix associated to $s$ and $t$ since the latter is given by  
$$x^{d_t-1}s,x^{d_t-2}s,\ldots,s,x^{d_s-1}t,x^{d_s-2}t,\ldots,t.$$
Conversely, from the independence of the vectors in \cref{eq:lemsylv} we obtain the column reducedness of $\Sx$. 
\qed
\end{proof}

%
%

\section{Bivariate polynomial division} \label{sec:division}

In this section we propose a normal form algorithm for bivariate polynomials modulo the ideal $I=\langle a,b\rangle$. 
The algorithm relies on matrix polynomial division. Bivariate polynomials in $\K[x,y]$ are viewed as 
univariate polynomial vectors alternately over $\K[x]$ and $\K[y]$, dividing such a vector by $\Sy$ or $\Sx$, is indeed quivalent to 
reducing the associated polynomial modulo the ideal. Sylvester matrices are Toeplitz-like matrices, 
we first recall in \cref{sec:structmat} how operations on matrices in this class can be performed 
taking into account their structure~\cite{BiPa94,pan01}. We then study the division with remainder of a polynomial vector by $\Sy$ or $\Sx$ in 
\cref{subsec:vecdiv}. In order to be able to define a normal form and perform the division efficiently, we rely on  
a regularity assumption on leading coefficient matrices: we suppose that $\Sx$ and $\Sy$ are column reduced. This assumption is ultimately harmless for computing the last invariant factor~(\cref{sec:shifts}). 

In \cref{sec:polydiv} we present the normal form algorithm. We keep the same notations as before  
for the degrees of $a$ and $b$, and the dimensions of the matrices; especially, $d$ is the maximum 
degree in $x$ and $\Sy$ is $n_y\times n_y$. 
 Given a  polynomial~$f\in \K[x,y]$, we show how to compute a unique polynomial $\hat f \in \K[x,y]_{<(d,n_y)}$,  that we denote by $\hat f = f \rem I$, such 
that $f - \hat f \in I$ (\cref{prop:division}).
Uniqueness is ensured using a properness property provided by the polynomial matrix division. 
The construction is a $\K$-linear map that sends $f$ to $\hat f$ whose $y$-coefficients are given by the entries of a vector $v_y(\hat f) \in \K[x]_{< d}^{n_y}$ such that 
$\Sy^{\!\!-1}v_y(\hat f)$ is strictly proper (tends to zero when $x$ tends to infinity), 
see \cref{eq:divvec}.
This allows us to represent the elements in $\Al$ 
by normal forms. The transpose algorithm, which computes corresponding power projections, is derived in 
\cref{sec:Tdivision}

With  $\deg_x a = d_a$, $\deg_x b = d_b$, $\deg_y a = e_a$, and $\deg_y b = e_b$,
 the quotient algebra $\Al$ has dimension at most~$d_a e_b+ d_be_a$. In order to represent its elements, the quotient is embedded in the space $\K[x]_{< d}^{n_y}$ of 
dimension $$dn_y=\max\{d_a,d_b\}(e_a+e_b)$$ 
which can therefore be slightly larger (\cref{ex:notproper}).


\subsection{Structured matrix arithmetic} \label{sec:structmat}

The normal form algorithm exploits the fact that Sylvester matrices are structured. The class of structure that we are facing 
is the one of {Toeplitz-like polynomial matrices} which are commonly handled using the notion of {displacement rank}~\cite{KKM79}.  
The notion allows to have a concise matrix representation 
through which matrix arithmetic can be implemented efficiently~\cite{BiPa94,pan01}. 

Given by the polynomials $a$ and $b$, $\Sx$ and $\Sy$ are represented using $O(de)$ elements of $\K$.
The division algorithm  requires to solve associated linear systems and uses matrix inversion with truncated power series entries. 
We consider that polynomial Sylvester matrices and their inverses are represented using their concise Toeplitz-like representations~\cite{BiPa94}. This is obtained for example by extension of the 
$\Sigma LU$ form defined over fields~\cite{Kal94}, to  polynomials or truncated power 
series~\cite[Sec.\,3]{PSV22}.

Multiplying an $n\times n$ polynomial Sylvester matrix of degree $d$ by a polynomial vector of 
degree at most $l$ over~$\K[x]$, can be done using $\softO{n(d+l)}$ arithmetic operations in $\K$~\cite{BiPa94}. This cost bound is valid for the same type of multiplication using instead the inverse of the matrix modulo $x^l$ when it exists. 
If $T\in \K^{n\times n}$ is a non-singular Sylvester matrix and~$v\in \K^n$, then the 
linear system $T^{-1}v$ can be solved using~$\softO{n}$ arithmetic operations. This is obtained by combining an inversion formula for the Sylvester matrix \cite{La1992}, and matrix Pad\'e approximation \cite{BeLa94} 
 (see also 
\cite[Chap.\,2, Sec.\,9]{BiPa94} and \cite[Sec.\,5]{Vil18}). 
The declination of this is applied in \cref{subsec:vecdiv} over truncated 
power series modulo~$x^l$. Let~$S\in \K[x]_{d}^{n\times n}$ be a polynomial Sylvester matrix 
such that $\det S(0) \neq 0$, and consider a vector $v\in \K[x]^{n}$ of degree at most $l$. The system $S^{-1}v$ can be solved modulo~$x^l$
using~$\softO{n(d+l)}$ arithmetic operations. 
From~\cite[Prop.\,5.1]{Vil18}, the matrix inverse modulo $x^l$ can itself be computed (with concise representation) within the same cost bound.


\subsection{Matrix and bivariate polynomial division} \label{subsec:vecdiv}

Consider $S$ in $\K[x]^{n\times n}$, non-singular of degree $d$. For any vector $v\in \K[x]^n$, we know from 
\cite[Thm.\,6.3-15, p.\,389]{Kailath80} that there exist 
unique~$w, \hat v \in \K[x]^n$ such that 
\begin{equation}\label{eq:divvec}
v = Sw + \hat v,
\end{equation}
and $S^{-1}\hat v$ is strictly proper.  From \cite[Thm.\,6.3-10, p.\,383]{Kailath80} we further have that 
the polynomial remainder vector $\hat v$ has degree less $d$; note however that uniqueness is ensured by properness and not by the latter 
degree property~(\cref{ex:notproper}). 

The following will be applied to both $\Sx$ and $\Sy$, hence we take a general notation $S$ for the statement. 
We propose a structured matrix polynomial adaptation of the Cook-Sieveking-Kung algorithm for (scalar) polynomial division with remainder, about which the reader may refer to \cite[Sec.\,9.1]{GaGe99}.

\begin{lemma} \label{lem:matdiv}
Let $S \in \K[x]^{n\times n}$ be a  Sylvester matrix of degree $d$, and assume that $S$ is column reduced.   
Consider 
a vector $v\in \K[x]^n$ of degree at most $l$. The unique remainder $\hat v$ of the division of $v$ par $S$ as in \cref{eq:divvec} can be computed using~$\softO{n(d+l)}$ arithmetic operations in $\K$. 
\end{lemma}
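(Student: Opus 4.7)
The plan is to adapt the Cook--Sieveking--Kung division scheme to polynomial Sylvester matrices, by combining reversals with the Toeplitz-like arithmetic recalled in \cref{sec:structmat}. First I would dispose of the trivial case $l < d$: the column reducedness of $S$ together with $\deg v < d$ forces $S^{-1}v$ to already be strictly proper, and by uniqueness one returns $w = 0$, $\hat v = v$. Assume henceforth $l \geq d$. Writing $d_1,\ldots,d_n$ for the column degrees of $S$, the invertibility of the leading column coefficient matrix yields the predictable degree property $\deg(Sw) = \max_j (d_j + \deg w_j)$; combined with $\deg(Sw) \leq \max(\deg v, \deg \hat v) \leq l$, this gives the componentwise bound $\deg w_j \leq l - d_j$.

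Next, introduce the column-wise reversal $\bar S$ defined by $\bar S_{ij}(x) = x^{d_j} S_{ij}(1/x)$; its constant coefficient $\bar S(0)$ is the leading column coefficient matrix of $S$, and is therefore invertible. The matrix $\bar S$ is again a Sylvester matrix --- the one attached to the $x$-reversals of $a$ and $b$ --- so it carries the same Toeplitz-like structure as $S$. Let $\bar v(x) = x^l v(1/x)$, $\bar w_j(x) = x^{l-d_j} w_j(1/x)$, and $\bar{\hat v}(x) = x^l \hat v(1/x)$; reversing \cref{eq:divvec} yields $\bar v = \bar S\,\bar w + \bar{\hat v}$. Now I would read the strict properness of $S^{-1}\hat v$ through the substitution $x \mapsto 1/x$: each component of $(S^{-1}\hat v)(1/x)$ vanishes at $x = 0$, so each component of $x^l(S^{-1}\hat v)(1/x)$ is divisible by $x^{l+1}$. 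Since $\bar S^{-1}\bar{\hat v} = \mathrm{diag}(x^{-d_j}) \cdot x^l (S^{-1}\hat v)(1/x)$, the $j$th component of $\bar S^{-1}\bar{\hat v}$ is divisible by $x^{l-d_j+1}$, and therefore
\[
\bar w_j \equiv (\bar S^{-1}\bar v)_j \pmod{x^{l-d_j+1}}, \qquad j = 1,\ldots,n.
\]
Since $\deg \bar w_j \leq l-d_j$, this congruence determines $\bar w$, hence $w$, uniquely.

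The algorithm then follows this identity: compute a Toeplitz-like representation of $\bar S^{-1} \bmod x^{l+1}$, multiply it by $\bar v$ truncated at the same precision to extract the $\bar w_j$'s, reverse back to obtain $w$, and finally recover $\hat v = v - Sw$ by a single structured matrix--vector product. By \cref{sec:structmat}, the truncated structured inversion at precision $O(d+l)$, the subsequent structured matrix--vector multiplication modulo $x^{l+1}$, and the final product of $S$ (of degree $d$) by $w$ (of degree $O(l)$) each run in $\softO{n(d+l)}$ arithmetic operations in $\K$; summing gives the claimed bound.

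The main obstacle, and the place where the quasi-linear complexity is really at stake, is the truncated Toeplitz-like inversion of $\bar S$ at precision $O(l)$: without the Sylvester-specific inversion formula combined with the matrix Pad\'e approximation pointed to in \cref{sec:structmat}, this step alone would already cost $\softO{n^2(d+l)}$ via generic polynomial matrix arithmetic and would break the target complexity.
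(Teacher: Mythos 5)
Your handling of the case $l < d$ is incorrect, and the error is not cosmetic: it is precisely the pitfall that \cref{ex:notproper} is designed to illustrate. You assert that column reducedness of $S$ together with $\deg v < d$ forces $S^{-1}v$ to be strictly proper, so that $\hat v = v$. But the properness criterion of \cite[Thm.\,6.3-11]{Kailath80} involves \emph{row} reducedness and compares each entry of $v$ to the corresponding \emph{row} degree of $S$, not to the uniform bound $d = \max_j d_j$. When the column degrees of $S$ are unequal (which for a Sylvester matrix $\Sy$ happens exactly when $d_a \neq d_b$), some row degrees of $S$ are strictly smaller than $d$ and the conclusion fails. Concretely, in \cref{ex:notproper} one has $S = \Sy$ with column degrees $(2,2,1)$, $d=2$, and $v = v_y(f) = \trsp{[0~0~x]}$ of degree $1 < d$; yet $S^{-1}v$ has third component equal to $1$, so it is not strictly proper, and the correct remainder is $\hat v = \trsp{[-x~0~0]} \neq v$. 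Your algorithm would return the wrong answer on this instance.

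The paper avoids this by first treating the uniform case $d_1=d_2$ (where all columns have degree $d$, so column reducedness does imply row reducedness with all row degrees equal to $d$, and the shortcut is valid precisely when $l<d$), and then reducing the unbalanced case $d_1>d_2$ to the uniform one via the scaling $T = S D^{-1}$ with $D=\mathrm{diag}(x^\delta,\ldots,x^\delta,1,\ldots,1)$; there the trivial case is triggered only when $l < d_2$, a strictly stronger condition. Your argument for $l \geq d$, on the other hand, is correct and in fact a clean reformulation of the paper's: your column-wise reversal $\bar S_{ij}=x^{d_j}S_{ij}(1/x)$ is exactly $\rev_{d_2}(S D^{-1})$, so the two routes coincide up to bookkeeping, and your per-column congruences $\bar w_j \equiv (\bar S^{-1}\bar v)_j \pmod{x^{l-d_j+1}}$ together with the degree bounds $\deg \bar w_j \leq l-d_j$ do determine $w$. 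The simplest repair is therefore to discard your trivial-case shortcut entirely and run the $l\geq d$ argument with $l$ replaced by $\max\{l,d\}$: every step goes through verbatim (one still has $\deg(Sw)\leq\max\{\deg v,\deg\hat v\}<\max\{l,d\}+1$, hence $\deg w_j\leq \max\{l,d\}-d_j$), the cost stays $\softO{n(d+l)}$, and on \cref{ex:notproper} with $l:=d=2$ the congruences give $w=\trsp{[0~0~1]}$ and the correct $\hat v=\trsp{[-x~0~0]}$.
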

\begin{proof}
Consider that  $S$ is associated to two polynomials $a,b\in K[x,y]$ as previously, such that $S=\Sy$ and we have $e_1$ columns of degree $d_1$ and $e_2$ columns of degree $d_2$. 
Up to row and column permutations we assume that $d=\max\{d_1,d_2\}=d_1$. 

We first treat the case $d=d_1=d_2$. All the columns of $S$ have the same degree, hence since $S$ is column reduced it is also 
row reduced (use the definition given before \cref{lem:rootsinfty}, on the rows).

If $l < d$, then we take $\hat v=v$. From \cite[Thm.\,6.3-11, p.\,385]{Kailath80}, by row reducedness, we know that 
$S^{-1}\hat v$ is strictly proper.
If $l\geq d$, the polynomial division can be perfomed by reformulating \cite[Sec.\,9.1, Eq. (2)]{GaGe99} on matrices. 
Since $S$ has non-singular leading matrix, by the predictable degree property \cite[Thm.\,6.3-13, p.\,387]{Kailath80}
we know that the quotient vector~$w$ has degree $\deg v -d$, hence at most $l-d$. Using reversals of matrix polynomials, \cref{eq:divvec} can be rewritten as 
$$
\rev _l(v) = \rev _d(S) \cdot \rev _{l-d}(w) +  x^{l-d+1} \rev _{d-1}(\hat v),
$$
hence we have 
\begin{equation}\label{eq:computrev}
\rev _{l-d}(w) \equiv \rev _d(S)^{-1} \rev _l(v) \bmod x^{l-d+1}. 
\end{equation}
Remark that by reducedness assumption the coefficient matrix of degree $0$ of $\rev _d(S)$ is non-singular, thus the latter matrix 
in invertible modulo $x^{l-d+1}$.
As soon as $w'=\rev _{l-d}(w)$ hence $w=\rev _{l-d}(w')$ are known, then $\hat v$ can be deduced using $\hat v=v-Sw$.
We know that $S^{-1}\hat v$ is strictly proper using reducedness, as done previously. 
Using fast structured matrix arithmetic (\cref{sec:structmat}), $\rev _{l-d}(w)$ is computed from \cref{eq:computrev} and 
$\hat v$ is obtained within the claim cost bound.  

When $d=d_1>d_2$, first we balance the columns degrees.  
With $\delta=d_1-d_2>0$, 
take $D={\text{\rm diag}}(x^{\delta}, \ldots, x^{\delta}, 1, \ldots , 1)$, with $e_1$ entries $x^\delta$.
The matrix $T=SD^{-1}$ has all its column degrees equal to $d_2$. 
Here and below the degree of a rational function 
is the difference between the degrees of the numerator and the denominator. Column and row reducedness are extended accordingly. 

If $l<d_2$ we let $v'=v$, otherwise we can compute a polynomial 
vector $w'$ of degree at most $l-d_2$  and $v'=v - T w'$ of degree less than $d_2$ such that $T^{-1}v'$ is strictly proper. 
This is done using \cref{eq:divvec} after having multiplied everything by $x^{\delta}$ so as to be reduced to a division with polynomial matrices, in time $\softO{n(d+l)}$. This is similar to 
the $d_1=d_2$ case above since $T$ is column reduced.

Then, taking the quotient of the  first $e_1$ entries of $w'$ by $x^{\delta}$, we write 
 $w'= Dw+z$, where $z$ is of degree less than $\delta$ and such that only its first $e_1$ entries may be non-zero. 
The vector $w$ remains of degree at most $l-d_2$, and we obtain $\hat v$ in time $\softO{n(d+l)}$  as 
$$
\hat v = v - Sw= v- S D^{-1} (w'-z) = v' + S D^{-1}z.
$$
In order to complete the proof we check that $S^{-1}\hat v$ is strictly proper. This vector is $S^{-1}\hat v= S^{-1}v'+ D^{-1}z= D^{-1} T^{-1}v'+ D^{-1}z$.
By construction, $T^{-1}v'$ is strictly proper, it is thus the same for $D^{-1} T^{-1}v'$; $z$ has degree at most $\delta -1$ for 
its first~$e_1$ entries (the other ones are zero), hence $D^{-1}z$ is strictly proper. 
\qed 
\end{proof}


\subsection{Normal form modulo the bivariate ideal} \label{sec:polydiv}

Given a  polynomial $f\in \K[x,y]$ whose $y$-degree is less than the dimension $n_y$ of $\Sy$, 
we can apply \cref{lem:matdiv} to the vector $v_y(f)\in \K[x]^{n_y}$ of the coefficients of $f$. 
\Cref{eq:divvec} becomes  
$$
v_y(f) =  \Sy w + v_y(\hat f)
$$
on vectors, and by definition of the Sylvester matrix we have 
$$
\hat f = f - u a - v b \in f+I
$$
for some $u,v\in \K[x,y]$, with $\hat f$ of $x$-degree less than $d$. We show with \cref{prop:division} that, thanks to the uniqueness of the remainder, this allows us 
to define a normal form modulo $\langle a,b\rangle$. The general $y$-degree case for $f$ is treated 
using a preparatory division by $\Sx$ (whose entries are in $\K[y]$) in order to reduce the degree
in $y$.
The overall construction gives a $\K$-linear map
\begin{equation} \label{eq:defmap}
\begin{array}{rl}
\varphi: \K[x,y] \rightarrow & \K[x,y]_{<(d,n_y)}\\
f \mapsto & \hat f = f \rem I  
\end{array}
\end{equation}
such that $f - \varphi(f) \in I$, and $\varphi(g)=0$ if $g\in I$.
The map $\varphi$ is thus appropriate in order to represent the elements in $\Al$ 
by normal forms.

\begin{example}\label{ex:notproper}
With $\K=\mathbb Q$, consider $a=x^2y+y$ and $b=xy^2+x$; we have $d=d_a=2$ and  
$n_y=e_a+e_b=1+2=3$. If 
$f=x$ then both $f$ and $f-b=-xy^2$ are in $\K[x,y]_{<(2,3)}$, hence the map $\varphi$ 
might not be surjective. 
The division as in \cref{eq:divvec} leads to 
$$
v_y (f)= 
\left[\begin{array}{c}
0 
\\
 0
\\
 x
\end{array}\right]
=
\Sy w + v_y(\hat f)
=
\left[\begin{array}{ccc}
x^{2}+1 & 0 & x  
\\
 0 & x^{2}+1 & 0 
\\
 0 & 0 & x  
\end{array}\right]
\left[\begin{array}{c}
0
\\
 0 
\\
 1 
\end{array}\right]
+
\left[\begin{array}{c}
-x  
\\
 0
\\
 0 
\end{array}\right],
$$
and $\hat f=f-b$ since we can check that $S^{-1} v_y(\hat f)$ is strictly proper, 
whereas $S^{-1} v_y(f)$ is not. It may be noted that the quotient algebra 
$\K[x,y]/\langle a,b\rangle$ has dimension $5$, which is smaller than the dimension of 
$\K[x,y]_{<(2,3)}$.
\qed
\end{example}

If $\eta = \deg_y f \geq n_y$ and $\delta = \deg_x f$ is less than the dimension $n_x$ of 
$\Sx$, then we can directly proceed to the division using $\Sx$.   
Otherwise, as we now see with \cref{lem:phi0}, we first extend $\Sx$ to a bigger  appropriate Sylvester matrix 
$\Tx$ of dimension $\delta+1$. 
By linearization, we associate to $f$ a vector $v_x(f) \in \K[y]^{\delta+1}$ of $y$-degree $\eta$. Then using division by $\Tx$, 
whose $y$-degree is the degree $e$ of $\Sx$,
we can compute $f'$ of $y$-degree less than $e < n_y$, such that $f-f'\in I$.

\begin{lemma} \label{lem:phi0}
Assume that the Sylvester matrix $\Sx$ associated to $a$ and $b$ with respect to $x$ is column reduced. 
Consider $f\in \K[x,y]$ of $x$-degree at most $\delta$ and $y$-degree at most $\eta \geq n_y$. Using $\softO{(n_x+\delta) \eta}$ arithmetic operations in $\K$ we can 
compute a polynomial $f'\in \K[x,y]$, of $x$-degree at most $\max\{n_x-1,\delta\}$
and $y$-degree less than $e<n_y$, such that~$f-f'\in I$. 

\end{lemma}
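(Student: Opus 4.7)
The plan is to reduce the task to a single application of \cref{lem:matdiv}, after extending $\Sx$ into an appropriate square matrix over $\K[y]$ of dimension $N = \max\{n_x,\,\delta+1\}$. When $\delta \leq n_x - 1$ we take $\Tx = \Sx$ and pad $v_x(f)\in \K[y]^{\delta+1}$ with zero entries to reach $\K[y]^{n_x}$. When $\delta \geq n_x$, we define $\Tx \in \K[y]^{N \times N}$ to be the matrix whose columns are the $x$-coefficient vectors of the shifts $x^j a$ for $0 \leq j < N - d_a$ followed by those of $x^k b$ for $0 \leq k < d_a$. Each such shift has $x$-degree at most $N - 1$, the column count is $(N - d_a) + d_a = N$, and the special case $N = n_x$ recovers $\Sx$, so that $\Tx$ genuinely extends $\Sx$ by additional shifts of $a$.

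The main obstacle is to show that $\Tx$ remains column reduced, so that \cref{lem:matdiv} applies. The column degrees of $\Tx$ in $y$ are $e_a$ on $a$-columns and $e_b$ on $b$-columns, hence its leading coefficient matrix has columns given by the $x$-coefficient vectors of $x^j s$ for $0 \leq j < N - d_a$ and $x^k t$ for $0 \leq k < d_a$ in $\K^N$, where $s, t \in \K[x]$ are the $y$-leading coefficients of $a$ and $b$. By \cref{lem:rootsinfty}, the hypothesis that $\Sx$ is column reduced yields $\gcd(s,t) = 1$ together with $d_s = d_a$ or $d_t = d_b$; we treat the first case, the second being symmetric. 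A linear relation rewritten as $u(x) s(x) + v(x) t(x) = 0$ with $\deg u < N - d_a$ and $\deg v < d_a$ forces $u = t u'$ and $v = -s u'$ by coprimeness, and the constraint $\deg v < d_a = d_s$ gives $\deg u' < 0$, hence $u' = 0$. Thus the leading coefficient matrix is non-singular and $\Tx$ is column reduced.

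It remains to apply \cref{lem:matdiv} with $S = \Tx$ (a Sylvester-like matrix of $y$-degree $e$) and the padded input of $y$-degree at most $\eta$: this produces $w,\hat v \in \K[y]^N$ with $v_x(f) = \Tx w + \hat v$ and $\hat v$ of $y$-degree less than $e$, using $\softO{N(e+\eta)}$ arithmetic operations. Since $N \leq n_x + \delta$ and $\eta \geq n_y \geq e$, this fits within $\softO{(n_x+\delta)\eta}$. Define $f' \in \K[x,y]$ by $v_x(f') = \hat v$: it has $x$-degree at most $N - 1 = \max\{n_x - 1, \delta\}$ and $y$-degree less than $e < n_y$. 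Finally, each column of $\Tx$ is the $x$-coefficient vector of some $x^j a$ or $x^k b$, so the product $\Tx w$ is the coefficient vector of a polynomial combination $u(x,y)a + v(x,y)b \in I$; hence $f - f' \in I$, completing the argument.
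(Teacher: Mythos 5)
Your construction of $\Tx$ is a genuine alternative to the paper's: the paper builds the Sylvester matrix of $a$ and $x^m b$ (or of $x^m a$ and $b$), choosing which polynomial to multiply by $x^m$ according to which of $s,t$ is coprime to $x$, and then invokes \cref{lem:rootsinfty} to get column reducedness; you instead append $m$ extra shifts of $a$ while keeping the same $d_a$ shifts of $b$, and argue column reducedness by a direct degree count. The direct argument is correct as written, but only in the case you treat, and the claim that ``the second [case] is symmetric'' hides a real gap: your $\Tx$ is \emph{not} symmetric in $a$ and $b$. If $d_t=d_b$ but $d_s<d_a$, then every column of the leading matrix of your $\Tx$ has degree strictly less than $N-1$ (the $a$-block columns $x^j s$ have degree at most $(N-d_a-1)+d_s<N-1$, and the $b$-block columns $x^k t$ have degree at most $(d_a-1)+d_b=n_x-1<N-1$ as soon as $N>n_x$), so the row indexed by $x^{N-1}$ vanishes and the matrix is singular. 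To handle that case you must actually redefine $\Tx$ with $N-d_b$ shifts of $b$ and only $d_b$ shifts of $a$; your argument then works with $s$ and $t$ exchanged, but this is a different matrix, not a relabeling, and needs to be said.

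A secondary point: for $N>n_x$ your $\Tx$ (in either variant) is not the Sylvester matrix of any pair of polynomials, since the ``short'' block never reaches the top rows, whereas \cref{lem:matdiv} is stated for Sylvester matrices and its proof opens by assuming that structure. The cost analysis and the predictable-degree argument in \cref{lem:matdiv} do extend to such block-Toeplitz matrices with uniform column degree per block, but that extension should be noted rather than invoked silently. The paper's choice to keep a genuine Sylvester matrix, namely that of $a$ and $x^m b$ (multiplying $b$ by $x^m$ both raises its $x$-degree to $d_b+m$ and lifts all $b$-columns so they reach the top rows), avoids both issues at once, at the price of the divisibility-by-$x$ case split you did not need.
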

\begin{proof}

If $\delta$ is less than $n_x$, we simply take $\Tx=\Sx$. Otherwise, let $m=\delta-n_x +1$, 
and denote the $y$-leading coefficients of $a,b$ by $s,t \in \K[x]$.
Since $\Sx$ is column reduced, from \cref{lem:rootsinfty} we know that $\gcd(s,t)=1$.
Either $s$ or $t$ is not divisible by $x$, let us assume that it is $s$, and take for 
$\Tx$ over $\K[y]$, the Sylvester matrix associated to $a$ and $x^mb$ 
with respect to $x$. The Sylvester matrix associated to $s$ and $x^mt$ is non-singular, hence 
$\Tx$ is column reduced by \cref{lem:rootsinfty} again:
if either $\deg s=d_a$ or $\deg t=d_b$, then either $\deg s=d_a$ or $\deg t + m=d_b+m$.

This matrix $\Tx$ has dimension $\max\{n_x,\delta+1\}$, and degree $e=\max\{e_a,e_b\}$ in $y$.
The remainder of the division of $v_x(f)$ by $\Tx$ gives $f'$ such that $f-f'\in I$, its $y$-degree is less than the one of  
$\Tx$, and its $x$-degree is less than the dimension of~$\Tx$.
The cost bound is from \cref{lem:matdiv}, with a matrix of dimension $n=\max\{n_x,\delta+1\}$ and degree $e$, and a 
vector of degree $l=\eta\geq e$.
\qed
\end{proof}

\Cref{lem:phi0} allows to first reduce the $y$-degree, the reduction of the degree in $x$ now also ensures the normal form.

\begin{proposition}\label{prop:division}
Assume that the Sylvester matrices $\Sx$ and $\Sy$ associated to $a$ and $b$ are column reduced,
and consider~$f\in \K[x,y]$.  The $\K$-linear map 
in \cref{eq:defmap} is well defined by choosing for~$\hat f$ the unique polynomial in 
$\K[x,y]_{<(d,n_y)}$ such that $f-\hat f\in I$, and $\Sy^{\!\!-1}v_y(\hat f)$ is strictly proper. 
If $f$ has $x$-degree at most $\delta$ and $y$-degree at most~$\eta$, then this 
normal form for $f+I$ in $\Al$ can be computed
using $\softO{(d+\delta)(e+ \eta)}$ arithmetic operations in $\K$.
\end{proposition}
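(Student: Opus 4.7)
The plan is to construct $\hat f$ via a two-stage reduction—first lowering the $y$-degree via \cref{lem:phi0}, then lowering the $x$-degree and enforcing strict properness via \cref{lem:matdiv}—and to deduce uniqueness from a dimension count that matches the space of admissible remainders with $\Al$.

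To construct $\hat f$, if $\eta \geq n_y$ I apply \cref{lem:phi0} to obtain $f' \in \K[x,y]$ with $\deg_y f' < n_y$, $\deg_x f' \leq \max\{n_x - 1, \delta\}$ and $f - f' \in I$; otherwise I set $f' = f$. I then encode $f'$ as a vector $v_y(f') \in \K[x]^{n_y}$ and apply \cref{lem:matdiv} to divide by $\Sy$, which is column reduced by hypothesis. This produces the unique $\hat v \in \K[x]^{n_y}$ of $x$-degree less than $d$ with $v_y(f') - \hat v \in \Sy \K[x]^{n_y}$ and $\Sy^{-1}\hat v$ strictly proper; the polynomial $\hat f$ defined by $v_y(\hat f) = \hat v$ then lies in $\K[x,y]_{<(d, n_y)}$, satisfies $f - \hat f \in I$, and meets the strict properness condition. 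When it is invoked, the first stage contributes $\softO{(n_x + \delta)\eta} = \softO{(d + \delta)\eta}$ (using $n_x \leq 2d$); the second stage acts on a vector of length $n_y = O(e)$ with entries of $x$-degree $O(d + \delta)$, for a cost of $\softO{e(d + \delta)}$. The total is $\softO{(d+\delta)(e + \eta)}$, as claimed.

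For uniqueness, let $V \subseteq \K[x,y]_{<(d, n_y)}$ be the $\K$-subspace of polynomials $\hat g$ for which $\Sy^{-1}v_y(\hat g)$ is strictly proper. The Kailath-type theory underlying \cref{lem:matdiv} shows $V$ to coincide with the strictly-proper-remainder section of division by $\Sy$, and under the column reducedness of $\Sy$ its dimension is $\dim_\K V = \deg \det \Sy = \deg \res_y(a, b)$. The $\K$-linear map $\psi \colon V \to \Al$, $\hat g \mapsto \hat g + I$, is surjective by the construction above. Well-definedness of the normal form therefore reduces to the dimension identity $\dim_\K \Al = \deg \res_y(a, b)$: once this is established, $\psi$ is a surjection between $\K$-vector spaces of equal finite dimension, hence a bijection, making $\hat f = \psi^{-1}(f + I)$ the unique element of $\K[x,y]_{<(d, n_y)}$ with the stated properties; the $\K$-linearity of $\varphi$ follows formally from that of $\psi^{-1}$.

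The dimension identity is the main obstacle. By \cref{lem:rootsinfty} applied to $\Sx$, the $y$-leading coefficients $s, t \in \K[x]$ of $a, b$ are coprime, hence the system $a = b = 0$ has no common zeros at $y$-infinity. A Poisson-type factorisation of the resultant—whose potential $y$-infinity contribution is controlled by $s, t$—then collapses, and $\deg \res_y(a, b)$ equals the sum of intersection multiplicities at the affine common zeros, which is precisely $\dim_\K \Al$. The same Bezout identity $ps + qt = 1$ also justifies, inside $\K[x, y]$ rather than $\K(x)[y]$, the reduction of any monomial $y^m$ with $m \geq n_y$ modulo $\langle a, b\rangle$ to a polynomial of strictly smaller $y$-degree, which combined with the Hermite/Smith-form analysis underlying \cref{lem:laz} identifies $v_y(I \cap \K[x, y]_{<(*, n_y)})$ with $\Sy \K[x]^{n_y}$ and completes the argument.
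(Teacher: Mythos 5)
Your construction of $\hat f$ and the cost analysis follow the paper exactly (reduce the $y$-degree via \cref{lem:phi0} if necessary, then reduce the $x$-degree by dividing $v_y(f')$ by $\Sy$ via \cref{lem:matdiv}). These parts are correct.

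The uniqueness/well-definedness argument is where you depart from the paper, and there is a genuine gap. You want to deduce injectivity of $\psi\colon V\to\Al$ from the surjectivity you already have plus a dimension count $\dim_\K V = \dim_\K \Al$, reducing everything to the identity $\dim_\K\Al = \deg\det\Sy$. But your justification of that identity (``a Poisson-type factorisation \ldots then collapses'') is not a proof, and when you then gesture at ``identifying $v_y(I\cap\K[x,y]_{<(*,n_y)})$ with $\Sy\K[x]^{n_y}$,'' you are in fact asserting exactly the result you would need to prove injectivity directly, without ever isolating or establishing it. You also attach it to ``the Hermite/Smith-form analysis underlying \cref{lem:laz},'' which is not the right reference: \cref{lem:laz} is about the last invariant factor, not about membership representations of low $y$-degree ideal elements.

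The paper's route is sharper and shorter: instead of a dimension count, it proves directly that $\varphi(g)=0$ for $g\in I$. After the $\Sx$-division one has $g'\in I$ of $y$-degree less than $e<n_y$ with $g-g'\in I$; then the coprimeness of the $y$-leading coefficients of $a$ and $b$ (from column reducedness of $\Sx$ via \cref{lem:rootsinfty}) allows one to invoke Lazard's Lemma~7 \cite{Lazard85}, which gives an expression $g' = ra + sb$ with $\deg_y r < e_b$ and $\deg_y s < e_a$. That means $v_y(g')$ lies in $\Sy\K[x]^{n_y}$, so by uniqueness of the matrix-division remainder (with the zero quotient candidate available) the remainder is~$0$. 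Well-definedness of $\varphi$ on cosets then follows from $\K$-linearity. Your dimension-count strategy would also be valid logically, but establishing $\dim_\K\Al = \deg\det\Sy$ rigorously needs precisely this Lazard-style representation lemma (to prove the surjection $\K[x]^{n_y}/\Sy\K[x]^{n_y}\twoheadrightarrow\Al$ is injective), so it does not avoid the key ingredient --- it just hides it. You should cite Lazard's Lemma~7 explicitly and use it to close the argument.

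One minor point in your favour: you explicitly handle the case $\eta < n_y$ by setting $f'=f$, which the paper leaves implicit. And your use of \cref{lem:rootsinfty} applied to $\Sx$ to get coprimeness of the $y$-leading coefficients is the correct direction.
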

\begin{proof}
We show the existence of such an $\hat f$ for every $f$, then show that $\hat g =0$ if $g\in I$.
After division by $\Sx$ using \cref{lem:phi0} we have $f'\in \K[x,y]$ of $y$-degree less than $e<n_y=e_a+e_b$ such that 
$f-f' \in I$. 
Then by \cref{lem:matdiv}, that is by division by $\Sy$, we obtain $\hat f \in \K[x,y]_{<(d,n_y)}$ 
such that $f'-\hat f \in I$, hence $f-\hat f \in I$. 
By construction, $\Sy^{\!\!-1}v_y(\hat f)$ is strictly proper.
For $g\in I$, this first leads to some $g'$ of $y$-degree less than $e<n_y$.
Since $\Sy$ is column reduced, we know from \cref{lem:rootsinfty}  that the 
$y$-leading coefficients of $a$ and $b$ are relatively prime, hence using~\cite[Lem.\,7]{Lazard85}
there exist polynomials $r,s\in \K[x,y]$ such that 
$$
g'-ra-sb=0, ~\deg_y r < e_b, \textrm{~and~} \deg_y s < e_a. 
$$ 
By uniqueness it follows that we must have $\hat g = 0$ because  
this value is appropriate using above identity.

The map $\varphi(f) =\hat f$ is well defined and provides a normal form. For $f_1, f_2$ in the coset $f+I$ we indeed 
have $\varphi(f_1-f_2)=0$ hence $\varphi(f_1) = \varphi(f_2)$ by $\K$-linearity of the divisions.

From  
 \cref{lem:phi0}, the first division by $\Sx$ costs $\softO{(d+\delta)(e+ \eta)}$, where we use 
 that $\Sx$ has dimension $n_x \leq 2d$ and degree $e$. This leads 
 to the next division of a vector of degree at most $\max\{n_x-1,\delta\}$ by $\Sy$, whose dimension is 
 $n_y < 2e$ and degree~$d$. Using \cref{lem:matdiv} this adds $\softO{e(d+ \delta)}$ operations.
\qed
\end{proof}

\begin{example}
We continue with $a=x^2y+y$ and $b=xy^2+x$ as in \cref{ex:notproper}; $\Sx$ and $\Sy$ have dimension $n_x=n_y=3$.
For $f=y^3+x^3y^2+1$, 
we first reduce the $y$-degree using $\Sx$.
Since $\delta = \deg _x f \geq n_x$, we cannot directly use $\Sx$ which is $3\times 3$.
Following the proof of \cref{lem:phi0} we increase the dimension and
consider $\Tx\in \K[y]^{4\times 4}$, the Sylvester matrix  with respect to $x$ associated to $a$ and $xb$. 
The first division is therefore: 
$$
v_x(f)= 
\left[\begin{array}{c}
y^{2} 
\\
 0 
\\
 0 
\\
 y^{3}+1 
\end{array}\right]
=
\Tx w_1 +v_x(f') =  
\left[\begin{array}{cccc}
y  & 0 & y^{2}+1 & 0 
\\
 0 & y  & 0 & y^{2}+1 
\\
 y  & 0 & 0 & 0 
\\
 0 & y  & 0 & 0 
\end{array}\right]
\left[\begin{array}{c}
0 
\\
 y^{2} 
\\
 1 
\\
 -y  
\end{array}\right]
+
\left[\begin{array}{c}
-1 
\\
 y  
\\
 0 
\\
 1 
\end{array}\right].
$$
The new polynomial is $f'=-x^3+yx^2+1$, its $y$-degree is $1< n_y$, so the division by $\Sy \in \K[x]^{3\times 3}$ in order to reduce the $x$-degree is now possible:
$$
v_y (f')= 
\left[\begin{array}{c}
0 
\\
 x^{2} 
\\
 -x^{3}+1 
\end{array}\right]
=
\Sy w_2 + v_y(\hat f)
=
\left[\begin{array}{ccc}
x^{2}+1 & 0 & x  
\\
 0 & x^{2}+1 & 0 
\\
 0 & 0 & x  
\end{array}\right]
\left[\begin{array}{c}
x  
\\
 1 
\\
 -x^{2} 
\end{array}\right]
+
\left[\begin{array}{c}
-x  
\\
 -1 
\\
 1 
\end{array}\right]
$$
and we obtain the normal form $\hat f=-xy^2-y+1 \in \K[x,y]_{<(2,3)}$. 
\qed
\end{example}

The assumptions of \cref{prop:division} are central to be able to reduce the degree 
in $x$ and also ensure the normal form. The following example describes a situation with the existence 
of  roots at infinity with respect to $y$.

\begin{example} \label{ex:notlemlaz7}
With $\K = \mathbb F_7$, take $a=(x +3) y +x^{2}+5 x +5$ and $b=(x +3) (x +4) y +x^{2}+4 x +2$. Then, the minimal polynomial of $x$ in the quotient algebra is $x+2$ but cannot be obtained by combinations of $a$ and $b$ of $y$-degree less than $e_a=e_b=1$, that is using combinations of the columns of $\Sy$. The vector $\trsp{[0~~x+2]}$ is its own remainder of the division by $\Sy \in \K[x]^{2\times 2}$, hence $x+2$ is not reduced to zero while being in the ideal. 
In this case however, thanks to the random conditioning of \cref{sec:shifts}, we correctly compute the resultant (see \cref{sec:resultant}).
\end{example}

Since the multiplication in $\K[x,y]$ can be computed in quasi-linear time~\cite[Sec.\,8.4]{GaGe99}, 
\cref{prop:division} allows  multiplication in $\K[x,y]/\langle a,b\rangle$ using 
$\softO{de}$ arithmetic operations. This is valid as soon as both Sylvester matrices are column reduced. 
From \cref{lem:rootsinfty} this means that the $x$-leading (resp. $y$-) coefficients  of $a$ and $b$ 
are coprime and one of them has maximal degree $d_a$ or $d_b$ (resp. $e_a$ or $e_b$).
In a complementary situation, that is with a sufficiently generic ideal $\langle a , b\rangle$ 
for the graded lexicographic order 
and using the total degree, a quasi-linear complexity was already achieved in \cite{HoevenLarrieu2019} for the multiplication in such a quotient. 
Even though it retains specific assumptions on the ideal, 
let us also mention the multiplication bound $\softO{(de)^{1.5}}$ of \cite[Sec.\,4.5]{HMSS19}.


\subsection{Power projections via transposed normal form} \label{sec:Tdivision}

Using Shoup's general approach for the computation of minimal polynomials in a quotient algebra, we especially rely on the fact that the power projection problem is the transpose of the modular composition
problem~[\citealp{Shoup94}; \citealp[Sec.\,6]{Kal00-2}]. The normal form algorithm of \cref{prop:division} treats a special case of 
modular composition since $f \bmod I$ can be seen as $f(g(x)) \bmod I$ for $g=x$. 
Certain power projections can therefore already be derived by transposition from what we have done so far, as we explain in this section.  
This is used at the core of the general algorithm in \cref{sec:composition} for the computation 
of a larger number 
of $O(de)$ projections efficiently for $\K=\mathbb F_q$.

Consider 
the restriction $\varphi_{\delta,\eta}$  of $\varphi$ to the $\K$-vector 
space~${\mathcal U}=\K[x,y]_{\leq(\delta,\eta)}$, and denote $\K[x,y]_{<(d,n_y)}$ as a 
$\K$-vector space by $\mathcal V$.
We also introduce the dual spaces $\widehat {\mathcal U}$ and $\widehat{ \mathcal V}$ of the $\K$-linear forms on ${\mathcal U}$ and ${\mathcal V}$, respectively. 
The transpose of $\varphi_{\delta,\eta}$ is the $\K$-linear map 
\begin{equation}   \label{eq:linearform}
\begin{array}{rl}
\trsp{\varphi}_{\delta,\eta}: {\widehat {\mathcal V}} \rightarrow & {\widehat {\mathcal U}}\\
\ell \mapsto &  \ell \circ {\varphi}_{\delta,\eta}.  
\end{array}
\end{equation}
We view the polynomials in ${\mathcal U}$ as vectors on the monomial basis 
${\mathcal B}=\{1, x, \ldots, x^{\delta}, y, xy, \ldots, x^{\delta}y^{\eta}\}$. 
The linear forms in~${\widehat {\mathcal U}}$ on the dual basis 
of ${\mathcal B}$ are represented by vectors in $\K^{(\delta+1)(\eta+1)}$. 
The elements in ${{\mathcal V}}$ and ${\widehat {\mathcal V}}$ are viewed in $\K^{dn_y}$ on  
the basis $\{y^{n_y-1}, y^{n_y-1} x, \ldots, y^{n_y-1} x^{d-1}, 
y^{n_y-2}, y^{n_y-2} x, \ldots, x^{d-1}\}$ of 
${{\mathcal V}}$ (in accordance with the definition of the Sylvester matrix $\Sy$). 
From \cref{eq:linearform}, the entries of $\trsp{\varphi}_{\delta,\eta} (\ell)$  are the bivariate power projections
\begin{equation} \label{eq:powerproj}
(\ell \circ \varphi_{\delta,\eta})({x^{i}y^{j})} \textrm{~~for~} 
0\leq i \leq \delta \textrm{~and~}  0\leq j \leq \eta.
\end{equation}

We compute these projections by applying the transposition principle~[\citealp{Borde56,Fidu73}; \citealp{Shoup94}]. 
The principle asserts that if a $\K$-linear map $\phi: {\mathcal E}_1 \rightarrow {\mathcal E}_2$ can be computed by a linear straight-line program of length $l$, then the transpose map can be computed 
by a program of length $l+ \dim {\mathcal E}_2$ ($l$ if $\phi$ is an isomorphism)~\cite[Thm.\,13.20]{BCS97}. 
We use a commonly applied strategy to implement the principle~\cite{BLS03}. \cref{prop:trspred} follows directly from \cref{prop:division} e.g. by mimicking the results of
\cite[Sec.\,4]{BFSS06} in $\K[x,y]/\langle f(x),g(y)\rangle$ or \cite{PaSc06,PS13} modulo triangular sets.  The change concerns only the way in which 
the ideal is represented. 

\begin{proposition} \label{prop:trspred}
Assume that the Sylvester matrices $\Sx$ and $\Sy$ associated to $a$ and $b$ are column reduced.
Given two integers $\delta,\eta\geq 0$ and $\ell \in {\widehat {\mathcal V}}$ one can compute 
 $\trsp{\varphi}_{\delta,\eta} (\ell)$
using $\softO{(d+\delta)(e+ \eta)}$ arithmetic operations in $\K$. 
\end{proposition}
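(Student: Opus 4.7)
The plan is to apply the transposition principle of \cite[Thm.\,13.20]{BCS97} directly to the algorithm underlying \cref{prop:division}. The key observation is that, once $a$ and $b$ (hence $\Sx$, $\Sy$, $\Tx$ and all associated structured inverses modulo appropriate powers) are regarded as fixed preprocessed data, the map $\varphi_{\delta,\eta}: {\mathcal U} \to {\mathcal V}$ is a $\K$-linear map, and the algorithm of \cref{prop:division} computes it by a linear straight-line program of length $L = \softO{(d+\delta)(e+\eta)}$. This is because every step of that algorithm is either (i) an embedding of $f$ into a polynomial vector ($v_x(f)$ or $v_y(\cdot)$) or an extraction of coefficients, both of which are $\K$-linear coordinate projections/injections; (ii) a taking of polynomial reversals, which is a permutation of coefficients; or (iii) a structured matrix–vector product, either by a Sylvester matrix (for computing $v' = v - T w'$ or analogous updates) or by the truncated inverse $\rev_d(S)^{-1} \bmod x^{l-d+1}$ of a column-reduced Sylvester matrix (for solving \cref{eq:computrev}).

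By the transposition principle the transpose map $\trsp{\varphi}_{\delta,\eta}$ can then be computed by a linear program of length at most $L + \dim {\mathcal V}$, and since $\dim {\mathcal V} = d\,n_y \leq 2de$ this total is still $\softO{(d+\delta)(e+\eta)}$. Concretely, one reverses the order of the steps in \cref{prop:division} and replaces each elementary operation by its transpose: embeddings become extractions (and vice versa), reversals are self-transpose up to re-indexing, and structured matrix–vector products by a Sylvester matrix $M$ or by a truncated inverse $N$ are replaced by products by $\trsp{M}$ or $\trsp{N}$. Since the transpose of a Toeplitz-like matrix is again Toeplitz-like with the same displacement rank, the cost estimates for structured arithmetic recalled in \cref{sec:structmat} apply verbatim to the transposed operations, so each transposed step preserves the cost bound of its counterpart in \cref{lem:matdiv}. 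Summing contributions over the $\Sx$-division and the $\Sy$-division as in the proof of \cref{prop:division} yields the announced bound.

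The only delicate point, which is routine but must be checked, is to confirm that the original algorithm is genuinely a linear straight-line program once $a,b$ are fixed: the pivoting decisions made in \cref{lem:matdiv,lem:phi0} (such as whether $d_1 = d_2$, or whether $s$ or $t$ is not divisible by $x$) depend only on $a$ and $b$, not on the input $f$, so they are part of the offline preparation and do not obstruct the transposition. This is exactly the scheme used in analogous quotient-algebra settings in \cite[Sec.\,4]{BFSS06} for $\K[x,y]/\langle f(x),g(y)\rangle$ and in \cite{PaSc06,PS13} for triangular sets; only the underlying representation of the ideal (here by the two Sylvester matrices $\Sx$ and $\Sy$) is different, and the transpositional argument is insensitive to this change.
\qed
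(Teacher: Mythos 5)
Your proposal is correct and takes essentially the same route as the paper: treat the Sylvester matrices and their precomputed truncated inverses as fixed parameters, observe that the remaining operations of the division algorithm in \cref{lem:matdiv} (and hence of the normal-form algorithm in \cref{prop:division}) are a $\K$-linear straight-line program in the input vector, and invoke the transposition principle. Your proposal is somewhat more explicit than the paper's proof on two minor points that the paper leaves implicit — that the transpose of a Toeplitz-like matrix retains its displacement structure so the bounds of \cref{sec:structmat} carry over, and that the branch choices in \cref{lem:matdiv,lem:phi0} depend only on $a,b$ and so belong to the offline phase — but the argument is the same one.
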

\begin{proof}
The claim on $\trsp{\varphi}_{\delta,\eta}$ is going to follow from the application of the transposition principle to the algorithm of \cref{prop:division} for ${\varphi}_{\delta,\eta}$, 
with portions written as a $\K$-linear straight-line program.

The computation of ${\varphi}_{\delta,\eta}$ reduces to two applications of \cref{lem:matdiv}, hence 
it suffices to study the transposition of the matrix division with remainder algorithm. 
Regarding this algorithm, observe that if the matrix inverses such as in \cref{eq:computrev} are pre-computed, then afterwards only $\K$-linear forms in the entries of the input vector
 are involved. Furthermore, 
these linear forms can be computed by $\K$-linear straight-line computations. 
The division algorithm of \cref{lem:matdiv} can therefore be viewed as follows. 
The inverse of the reversed Sylvester matrix  in \cref{eq:computrev} is pre-computed 
over truncated power series in time as stated in \cref{lem:matdiv}, using the complexity bounds in \cref{sec:structmat}.  
Then the linear operations involving the input vector, including structured matrix times vector products~\cite{BiPa94}, are performed within the same cost bound.  
The transposed division algorithm follows: it uses the pre-computed inverse as a parameter, and it is obtained from 
the transposition principle applied to the linear straight-line remaining portions. 
For the transposed division, this leads to the same 
complexity bound as stated in \cref{lem:matdiv}, and  for the transpose $\trsp{\varphi}_{\delta,\eta}$,
to the bound as in \cref{prop:division}. 
\qed 
\end{proof}

We directly use the transposition principle. The transpose algorithm could nevertheless be stated more explicitly as done for the univariate case in~\cite{BLS03}\,---\,using duality with linear recurrence sequence extension~\cite{shoup91}, and for multivariate triangular sets in~\cite{PaSc06,PS13}.

%
%

\section{Application of Kedlaya and Umans' techniques} \label{sec:composition}

The minimal polynomial of the multiplication by $x$ in $\Al$ requires the computation of projections of 
$O(de)$ power of~$x$~(\cref{sec:invfact}). \cref{prop:trspred} therefore is not sufficient in order 
to achieve quasi-linear complexity. This now leads us to apply Kedlaya and Umans' techniques~\cite{KU11},
and their extensions in \cite{PS13,HoeLec21a}, for efficient modular composition over a finite field 
(\cref{cor:divisionall})
and power projection by transposition (\cref{cor:powerproj}).

Given
three polynomials $f,g,h \in \K[x]$ with 
$\deg(f) < n$ and $\deg(g) < n$ where \(n=\deg(h)\), 
the problem of modular composition is to compute $f(g) \bmod  h$~\cite{BK78}. 
(The problem is more fundamentally stated over a ring.)
At the very beginning in this case, we benefit from the fact that  for such polynomials 
the division with remainder can be computed using $\softO{n}$ arithmetic operations~\cite[Sec.\,9.1]{GaGe99}. 
One of the difficulties in the bivariate case is to be able to start from an analogous point, we mean from an efficient division with remainder modulo $I$. Once this is achieved, 
the approach of~\cite{KU11} can be followed for both modular composition and power projection. 
This is what has been accomplished in~\cite{PS13} (multivariate case) 
and \cite{HoeLec21a} (special case $g=x$), with respective shapes of the ideal $I$ that we have already mentioned.  
We proceed in the same way, and integrate the new division (normal form) algorithm into the overall process.  
We therefore do not repeat all the details for the proof of \cref{thm:KU} and its corollaries, and refer the reader to the stem papers.  
As for \cref{prop:trspred} our change is the way in which 
the ideal is represented, which leads to a new modular bivariate projection algorithm in \cref{cor:powerproj}.  

The first main ingredient is to reduce the problem of division (of modular composition), to divisions 
with smaller input degrees and to a problem of multipoint evaluation~\cite[Pb.\,2.1]{KU11}. More precisely, 
\cref{thm:KU} shows that the problem of computing the normal form of $f \in \K[x]_{<\delta}$ modulo $I$ can be reduced, for $2 \leq d_{\epsilon} < \delta$, to normal forms of polynomials of $x$- and $y$-degrees less than $d_{\epsilon}d\log \delta$ 
and $d_{\epsilon}e\log \delta$, respectively, and to multipoint evaluation. Here, remember the notations
 $d=\max\{deg_x \,a, deg_x \,b\}$ and $e=\max\{deg_y \,a , deg_y \,b\}$. The Sylvester matrix $\Sy$ is
  $n_y\times n_y$ over$\K$.

\begin{theorem}[{\normalfont \cite[Thm.\,3.1]{KU11}, generalized in \cite{PS13,HoeLec21a}}] 
\label{thm:KU}
Consider 
$f\in \K[x]$ of degree less than~$\delta$, and an arbitrary integer $2 \leq d_{\epsilon} < \delta$. 
Assume that the Sylvester matrices $\Sx$ and $\Sy$ associated to $a$ and $b$ are column reduced, 
and $|\K|> l (d_{\epsilon} -1)\max\{d-1,n_y-1\}$ where 
$l=\lceil \log_{d_{\epsilon}}(\delta)\rceil$. 
If $\delta=O(de)$ then  $f(x) \rem I $ can be computed using 
$\softO{d_{\epsilon}^2 de}$ arithmetic operations in $\K$,  plus one multivariate multipoint evaluation 
of a polynomial with $l$ variables over $\K$ and individual degrees less than $d_{\epsilon}$, at 
$O(l^2d_{\epsilon}^2 de)$ points in~$\K^l$. 
\end{theorem}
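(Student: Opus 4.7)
The plan is to adapt the Kedlaya--Umans strategy for univariate modular composition \cite{KU11} to the bivariate ideal setting, plugging in the normal form algorithm of \cref{prop:division} as the enabling primitive. Since $\delta \le d_\epsilon^l$, decompose $f$ in base $d_\epsilon$: there exists a unique $F \in \K[z_1, \ldots, z_l]$ with individual degrees less than $d_\epsilon$ such that $f(x) = F(x, x^{d_\epsilon}, x^{d_\epsilon^2}, \ldots, x^{d_\epsilon^{l-1}})$. Setting $r_i = x^{d_\epsilon^{i-1}} \rem I \in \K[x,y]_{<(d,n_y)}$ for $i = 1, \ldots, l$, we get $f \rem I = F(r_1, \ldots, r_l) \rem I$, and the whole computation reduces to (a) producing the $r_i$ and (b) evaluating the multivariate polynomial $F$ at the bivariate residues $r_i$ and then reducing the result modulo~$I$.

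For (a), I would precompute the $r_i$ iteratively via $r_{i+1} = r_i^{d_\epsilon}$ using binary exponentiation in $\Al$: each exponentiation requires $O(\log d_\epsilon)$ multiplications followed by normal forms via \cref{prop:division}, each costing $\softO{de}$, so this step totals $\softO{de}$. For (b), the key observation is that $G := F(r_1, \ldots, r_l) \in \K[x,y]$ has $x$-degree bounded by $l(d_\epsilon - 1)(d-1)$ and $y$-degree bounded by $l(d_\epsilon - 1)(n_y - 1)$, since any monomial of $F$ is a product of at most $l(d_\epsilon - 1)$ factors chosen among the $r_i$. The hypothesis $|\K| > l(d_\epsilon - 1)\max\{d-1, n_y-1\}$ is precisely what is needed to pick subsets $S, T \subset \K$ of cardinalities $l(d_\epsilon-1)(d-1)+1$ and $l(d_\epsilon-1)(n_y-1)+1$ respectively, so that $G$ is uniquely determined by its values on the grid $S \times T$ of size $O(l^2 d_\epsilon^2 de)$.

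Now compute these grid values without ever constructing $G$. First, apply bivariate multipoint evaluation separately to each of the $l$ precomputed polynomials $r_i$ at all grid points, in $\softO{l^2 d_\epsilon^2 de}$ per polynomial, hence $\softO{l^3 d_\epsilon^2 de}$ in total, absorbed into $\softO{d_\epsilon^2 de}$ since $l = O(\log \delta)$. For each grid point $(\alpha, \beta)$ we then have $G(\alpha, \beta) = F(r_1(\alpha,\beta), \ldots, r_l(\alpha,\beta))$, and collecting these values across the grid is exactly one multivariate multipoint evaluation of $F$ at the $O(l^2 d_\epsilon^2 de)$ points in $\K^l$ appearing in the statement. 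Finally, interpolate $G$ by fast bivariate interpolation from its grid values, then apply \cref{prop:division} to $G$ to recover $f \rem I$; with the degree bounds on $G$ and $\delta = O(de)$, this last normal-form reduction fits within $\softO{(d+ld_\epsilon d)(e+ld_\epsilon n_y)} = \softO{d_\epsilon^2 de}$.

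The main obstacle is the precise bookkeeping of degree growth in $G = F(r_1,\ldots,r_l)$ so that the grid cardinality, the lower bound on $|\K|$ in the hypothesis, and the cost of the final normal-form reduction all line up; the $l(d_\epsilon - 1)$ bound on the cumulative multiplicative contribution of the $r_i$'s in any monomial of $F$ is exactly what ties these three quantities together. Everything else is a mechanical assembly of the bivariate normal form from \cref{prop:division}, fast multipoint evaluation and interpolation, and binary exponentiation in $\Al$, following the general outline of \cite{KU11,PS13,HoeLec21a}.
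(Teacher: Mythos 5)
Your proposal is correct and follows essentially the same six-step Kedlaya--Umans strategy as the paper's proof: inverse Kronecker substitution of $f$, computation of the repeated $d_\epsilon$-th powers $x^{d_\epsilon^i} \rem I$ via \cref{prop:division}, bivariate grid evaluation of those residues, a single $l$-variate multipoint evaluation of the substituted polynomial, bivariate interpolation of $\phi(\chi_0,\ldots,\chi_{l-1})$, and a final reduction modulo $I$. The degree bookkeeping ($\delta' = l(d_\epsilon-1)(d-1)$, $\eta' = l(d_\epsilon-1)(n_y-1)$), the use of the $|\K|$ hypothesis to pick the grid, and the cost accounting all match the paper's argument.
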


\begin{proof}
The following six steps are those of the proof of \cite[Thm.\,3.1]{KU11}.

\begin{enumerate}
   \item \label{step:S1} We first appeal to the inverse Kronecker substitution \cite[Dfn.\,2.3]{KU11}, in order to map 
$f$ to a polynomial with $l$ variables and degree less than $d_{\epsilon}$ in each variable. 
This $\K$-linear map from $\K[x]_{< \delta}$ to  
$\K[z_0,\ldots, z_{l-1}]_{<(d_{\epsilon}, \ldots, d_{\epsilon})}$ is defined as follow.
For $0\leq k < \delta$,  the monomial
$x^k$ is sent to $z_0^{k_0}z_1^{k_1}\ldots z_{l-1}^{k_{l-1}}$, where $k_0,k_1,\ldots, k_{l-1}$ are the 
coefficients of the expansion of~$k$ in base $d_{\epsilon}$. This is extended linearly to $\K[x]_{< \delta}$, 
and $f$ is mapped in this way to a polynomial 
$\phi \in \K[z_0,\ldots, z_{l-1}]_{<(d_{\epsilon}, \ldots, d_{\epsilon})}$. 
The map is injective on $\K[x]_{< \delta}$ and is computed in linear time using monomial bases.

   \item Then we compute the polynomials $\chi _i = x^{d_{\epsilon}^{i}} \rem I$ in $\K[x,y]_{<(d,n_y)}$, for 
$i=0,\ldots, l-1$. This corresponds to~$l$ exponentiations  by $d_{\epsilon}$ modulo~$I$. 
By successive bivariate multiplications \cite[Sec.\,8.4]{GaGe99}, each followed by a reduction
modulo the ideal, this can be done in time $\softO{de}$ from \cref{prop:division}.

 A key property is that 
 $f(x) \rem I = \phi(\chi _0, \ldots, \chi _{l-1}) \rem I$. This leads to the idea of first computing
   $\phi(\chi _0, \ldots, \chi _{l-1})$ by evaluation-interpolation, and to perform only afterwards the reduction modulo the ideal. 
We have that the degree of $\phi(\chi _0, \ldots, \chi _{l-1})$ is at most $\delta' = l (d_{\epsilon} -1)(d-1)$ in $x$, and 
$\eta' = l (d_{\epsilon} -1)(n_y-1)$ in $y$. 

 \item We choose subsets $K_1$ and $K_2$ of $\K$ or cardinalities $\delta'+1$ and $\eta'+1$, respectively.
By multipoint bivariate evaluation,  
we compute all values $\mu _{i,j,k}= \chi _i(\lambda_j,\lambda_k) \in \K$ 
for $i=0,\ldots, l-1$  and $(\lambda_j,\lambda_k)\in K_1\times K_2$.
Using univariate evaluation~\cite[Sec.\,10.1]{GaGe99}, variable by variable, this can be done using 
$\softO{l \delta' \eta'}$ hence $\softO{d_{\epsilon}^2de}$ operations ($n_y\leq 2e$). 

 \item \label{step:S4} This is followed by all the evaluations $\phi(\mu _{1,j,k}, \ldots, \mu _{l,j,k})$, 
which is multipoint evaluation of a polynomial with $l$ variables, with individual degrees 
less than  $d_{\epsilon}$, at $(\delta'+1)(\eta'+1)$ points in $\K^l$.

 \item \label{step:S5} From there, $\phi(\chi _0, \ldots, \chi _{l-1})$ is recovered using bivariate interpolation from its values just obtained at  $K_1\times K_2$, this uses $\softO{d_{\epsilon}^2de}$ operations in a way similar to 
multipoint bivariate evaluation above. 

 \item \label{step:S6} Finally, $f(x) \rem I = \phi(\chi _0, \ldots, \chi _{l-1}) \rem I$. We know from 
\cref{prop:division} that this costs $\softO{\delta'\eta'}$ operations, which is  $\softO{d_{\epsilon}^2de}$.
\qed 
\end{enumerate}

\end{proof}

In line with \cite[Thm\,7.1]{KU11} and \cite{PS13,HoeLec21a}, thanks to fast multipoint evaluation~\cite[Cor.\,4.5]{KU11}, we now can bound the cost of the reduction of a univariate polynomial modulo the ideal.

\begin{corollary} \label{cor:divisionall}
Let $\K$ be a finite field $\mathbb F_q$. 
Assume that the Sylvester matrices $\Sx$ and $\Sy$ associated to $a$ and $b$ are column reduced, and 
consider $f\in \mathbb F_q[x]$ of degree less than $\delta=4de$. For every constant $\epsilon>0$,
if $q\geq \delta^{1+\epsilon}$, 
then~$f \rem I$ can be computed using $O((de)^{1+\epsilon}\log(q) ^{1+o(1)})$ bit operations. 
\end{corollary}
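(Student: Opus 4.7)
The plan is to reduce to \cref{thm:KU} applied with $\delta=4de$ and to control the bit cost of its multipoint-evaluation subroutine (\cref{step:S4}) by the Kedlaya--Umans fast multivariate multipoint evaluation \cite[Cor.\,4.5]{KU11}. Everything else in the six-step reduction of \cref{thm:KU} already runs in $\softO{d_\epsilon^{2}de}$ arithmetic operations in $\mathbb F_q$, so the only genuine bit-complexity ingredient that remains to be inserted is this evaluation result for finite fields.

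First I would fix a parameter $d_\epsilon\geq 2$ to be chosen later and set $l=\lceil\log_{d_\epsilon}(\delta)\rceil$, so that $d_\epsilon^{l}=O(d_\epsilon\,\delta)=O(d_\epsilon\,de)$. Under the hypothesis $q\geq \delta^{1+\epsilon}=(4de)^{1+\epsilon}$, for a sensible choice of $d_\epsilon$ (any $d_\epsilon\leq (de)^{\epsilon/2}$, say) the inequality $|\mathbb F_q|>l(d_\epsilon-1)\max\{d-1,n_y-1\}$ demanded by \cref{thm:KU} is satisfied, and the column reducedness hypothesis is available by assumption; therefore \cref{thm:KU} applies and reduces $f\rem I$ to $\softO{d_\epsilon^{2}de}$ operations in $\mathbb F_q$ plus one evaluation of a polynomial $\phi\in\mathbb F_q[z_0,\ldots,z_{l-1}]$ of individual degrees less than $d_\epsilon$ at $N=O(l^{2}d_\epsilon^{2}de)$ points of $\mathbb F_q^{l}$. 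The $\softO{d_\epsilon^{2}de}$ arithmetic operations translate to $\softO{d_\epsilon^{2}de\,\log q}$ bit operations using $\softO{\log q}$ per field operation.

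For the multipoint evaluation step I would invoke \cite[Cor.\,4.5]{KU11}, which evaluates such a $\phi$ in $(d_\epsilon^{l}+N)^{1+o(1)}d_\epsilon^{o(l)}(\log q)^{1+o(1)}$ bit operations whenever $q$ is large enough in terms of the individual degrees and the number of variables; the bound $q\geq\delta^{1+\epsilon}$ is in particular what is required here. Since $d_\epsilon^{l}=O(d_\epsilon\,de)$ and $N=\softO{d_\epsilon^{2}de}$, the input size of the evaluation is bounded by $\softO{d_\epsilon^{2}de}$, giving a total bit cost of $\bigl(d_\epsilon^{2}de\bigr)^{1+o(1)}d_\epsilon^{o(l)}(\log q)^{1+o(1)}$ for this step.

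The final balancing is the only delicate point. Taking $d_\epsilon=(de)^{\mu}$ for a constant $\mu=\mu(\epsilon)>0$ chosen small enough makes $l=O(1/\mu)$ constant, so the $l^{2}$ factor and the $d_\epsilon^{o(l)}=(de)^{\mu\cdot o(1)}=(de)^{o(1)}$ factor are both absorbed into a subpolynomial overhead, while the main term $d_\epsilon^{2}de=(de)^{1+2\mu}$ can be made at most $(de)^{1+\epsilon}$ by picking $\mu\leq\epsilon/2$. Adding the two contributions yields the announced $O((de)^{1+\epsilon}\log(q)^{1+o(1)})$ bit-operation bound. The main obstacle I expect is precisely the tracking of the $d_\epsilon^{o(l)}$ factor from the Kedlaya--Umans evaluation algorithm together with the field-size requirement in \cref{thm:KU}: both must be reconciled with the single hypothesis $q\geq\delta^{1+\epsilon}$, which constrains how aggressively $d_\epsilon$ can be grown. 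Once this balance is in place, the corollary follows from \cref{thm:KU} by direct substitution, in full analogy with \cite[Thm.\,7.1]{KU11} and its generalizations in \cite{PS13,HoeLec21a}.
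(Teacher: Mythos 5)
Your proof follows essentially the same route as the paper's: apply \cref{thm:KU} with a parameter of the form $d_\epsilon=\delta^{\Theta(1/c)}$ (you write $d_\epsilon=(de)^\mu$, the paper takes $d_\epsilon=\lceil\delta^{1/c}\rceil$ — same thing up to renaming $\mu\leftrightarrow 1/c$ since $\delta=4de$), verify the field-size hypothesis $|\K|>l(d_\epsilon-1)\max\{d-1,n_y-1\}$ from $q\geq\delta^{1+\epsilon}$ and $d_\epsilon\leq(de)^{\epsilon/2}$, invoke \cite[Cor.\,4.5]{KU11} for the multipoint-evaluation step with $l$ constant, and balance $d_\epsilon^{2}de\leq(de)^{1+\epsilon}$ to conclude. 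The only stylistic difference is how the subpolynomial overhead in the Kedlaya--Umans evaluation bound is written (you carry a $d_\epsilon^{o(l)}$ factor, the paper phrases it via a ``for every constant $\gamma>0$'' exponent), but both are faithful readings of the same result and lead to the same final bound; your argument is correct and in full accord with the paper's proof.
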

\begin{proof}
Depending on $\epsilon$, we choose a large enough constant integer $c$ 
for $d_{\epsilon}=\lceil\delta ^{1/c}\rceil$ to be sufficiently small compared to~$de$. We have in particular, 
$d_{\epsilon}<\delta^{\epsilon}$, and $l=\lceil \log_{d_{\epsilon}}(\delta)\rceil\leq c$.
For $\delta$ large enough this leads to  
$l (d_{\epsilon} -1)\max\{d-1,n_y-1\} \leq q$ and therefore we can apply \cref{thm:KU}. 
We know that  $f \rem I$ can be computed using $\softO{d_{\epsilon}^2 de}$ operations in~$\mathbb F_q$, 
which is $O(de)^{1+\epsilon}$, plus the cost of the associated multipoint evaluation.  
Then we use the fact that for every constant $\gamma >0$, there is an algorithm for evaluating a polynomial in 
$\mathbb F_q[z_0,\ldots, z_{l-1}]_{<(d_{\epsilon}, \ldots, d_{\epsilon})}$
at $n$ points in~${\mathbb F}_q^l$ using $(d_{\epsilon}^l+n)^{1+\gamma}\log(q)^{1+o(1)}$
bit operations, when the individual degrees $d_{\epsilon}$ are sufficiently large, and the number of variables~$l$ is at most $d_{\epsilon}^{o(1)}$~\cite[Cor.\,4.5]{KU11}.
Considering the evaluation parameters in \cref{thm:KU} and $l\leq c$, for every$\gamma>0$, 
the cost of the multipoint evaluation then is $O((\delta +d_{\epsilon}^2 de)^{1+\gamma}\log(q)^{1+o(1)})$,
which allows to obtain the claimed complexity bound. 
\qed
\end{proof}

As it has been said before, our presentation is simplified compared to the one of \cite{KU11}. The dependence in $q$,
in complexity bounds analogous to the one in \cref{cor:divisionall}, 
is made explicit and written  using polylogarithmic functions in \cite{PS13}. The study of 
 \cite{HoeLec21a} uses an explicit function of slow increase for the number of variables of the 
 multipoint evaluation problem.  
Sharper bounds and improved algorithms  can be found in \cite{HoeLec20c,BGGKU22} for  multipoint evaluation,
and~\cite{HoeLec21d} for multivariate modular composition over finite fields. Since we rely also on a solution for the dual problem, and that it is not treated  
in the latter references, we remain essentially based on \cite{KU11}.

In a similar way to what we did in \cref{sec:Tdivision} we now transpose the algorithm of 
\cref{cor:divisionall}.
Our reasoning is that of \cite[Thm\,7.7]{KU11}, \cite[Thm.\,3.3]{PS13} and \cite[Prop.\,1]{HoeLec21a} for modular power projection. We use the notation $\varphi$ of \cref{eq:defmap} for the normal form map.

\begin{corollary} \label{cor:powerproj}
Let $\K$ be a finite field $\mathbb F_q$. 
Assume that the Sylvester matrices $\Sx$ and $\Sy$ associated to $a$ and $b$ are column reduced. 
Let $\ell$ be a linear form in the dual of $\mathbb F_q[x,y]_{<(d,n_y)}$.
For every constant $\epsilon>0$, 
if $q\geq \delta^{1+\epsilon}$ with $\delta=4de$, 
then  the projections 
$
(\ell \circ \varphi)({x^{i})} 
$
for $0\leq i < \delta$  can be computed using $O((de)^{1+\epsilon}\log(q) ^{1+o(1)})$ bit operations. 
\end{corollary}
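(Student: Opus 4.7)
The plan is to apply the transposition principle to the algorithm of \cref{cor:divisionall}, viewing $\ell\circ\varphi$ restricted to $\mathbb F_q[x]_{<\delta}$ as the $\mathbb F_q$-linear form $f\mapsto \ell(f\rem I)$; its matrix on the monomial basis $\{x^i\}_{0\leq i<\delta}$ is exactly the vector of projections we want. Computing this vector is therefore the same as computing the transpose of the reduction algorithm of \cref{cor:divisionall} applied to the input $\ell$, and it suffices to exhibit an efficient transpose for each of the six steps in the proof of \cref{thm:KU}, using the same $d_\epsilon$ and $l=\lceil\log_{d_\epsilon}(\delta)\rceil$.

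Steps~2 and~3 of that proof depend on $a$ and $b$ only: they produce the polynomials $\chi_i=x^{d_\epsilon^i}\rem I$ together with their evaluations $\mu_{i,j,k}$ at $K_1\times K_2$. These remain unchanged in the transposed algorithm, with total cost $\softO{d_\epsilon^2de}$ arithmetic operations in $\mathbb F_q$ through \cref{prop:division} and tensored univariate multipoint evaluation. The four remaining steps depend $\mathbb F_q$-linearly on $f$. The inverse Kronecker substitution of \cref{step:S1} is a monomial permutation and is thus self-transposing in linear time. The bivariate interpolation of \cref{step:S5} at $K_1\times K_2$ is the inverse of a tensor product of Vandermonde matrices, so its transpose is univariate multipoint evaluation variable by variable, still fitting in $\softO{d_\epsilon^2de}$ operations. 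The final reduction of \cref{step:S6}, which applies $\varphi$ to a polynomial of $x$- and $y$-degrees $\delta'=l(d_\epsilon-1)(d-1)$ and $\eta'=l(d_\epsilon-1)(n_y-1)$, is transposed by \cref{prop:trspred} for a cost $\softO{(d+\delta')(e+\eta')}=\softO{d_\epsilon^2de}$.

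The critical ingredient is the multivariate multipoint evaluation of \cref{step:S4}: I would invoke the transpose of the Kedlaya--Umans algorithm \cite[Cor.\,4.5]{KU11}, whose bit complexity matches the $(d_\epsilon^l+n)^{1+\gamma}\log(q)^{1+o(1)}$ bound used in \cref{cor:divisionall} and whose availability is precisely the reason \cref{sec:composition} is kept based on \cite{KU11} rather than on the sharper but one-sided algorithms of \cite{HoeLec20c,BGGKU22,HoeLec21d}. This is the same dualisation that underpins the analogous power-projection results \cite[Thm.\,7.7]{KU11}, \cite[Thm.\,3.3]{PS13}, and \cite[Prop.\,1]{HoeLec21a}.

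Chaining these transposes in reverse order while reusing the precomputed $\chi_i$ and $\mu_{i,j,k}$ then produces the full vector $\bigl(\ell(x^i\rem I)\bigr)_{0\leq i<\delta}$ within the claimed bit complexity $O((de)^{1+\epsilon}\log(q)^{1+o(1)})$. The main obstacle is ensuring that each transposed building block is available within the target bit cost; as just noted, this reduces to the existence of a transposed multivariate multipoint evaluation algorithm in the regime of \cite[Cor.\,4.5]{KU11}, the remaining transpositions being either trivial (\cref{step:S1}), classical tensor-product duality (\cref{step:S5}), or provided by \cref{prop:trspred} (\cref{step:S6}).
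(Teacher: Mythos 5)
Your proposal is correct and follows essentially the same route as the paper: apply the transposition principle step by step to the six-step algorithm behind \cref{cor:divisionall}, with Steps 2 and 3 precomputed once (depending only on $a,b$), and the four $f$-dependent steps transposed using \cref{prop:trspred} for \cref{step:S6}, transposed tensor-product univariate interpolation for \cref{step:S5}, and the Kedlaya--Umans transposed multipoint evaluation for \cref{step:S4}. One small imprecision: the inverse Kronecker substitution of \cref{step:S1} is an injection from $\mathbb F_q[x]_{<\delta}$ into $\mathbb F_q[z_0,\ldots,z_{l-1}]_{<(d_\epsilon,\ldots,d_\epsilon)}$, not a permutation (since $\delta\leq d_\epsilon^l$ with equality only in degenerate cases), so its transpose is a \emph{projection} selecting the $\delta$ relevant coefficients, not the same map; the linear-time conclusion is of course unchanged. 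The paper is also a bit more explicit about \cref{step:S4}: it cites \cite[Thm.\,7.6]{KU11} for the transpose of multipoint evaluation in the square case (number of points equal to ambient dimension) and then handles the non-square case by splitting into several square instances as in \cite[Thm.\,7.7]{KU11}; your appeal to a ``transpose of \cite[Cor.\,4.5]{KU11}'' silently subsumes this, but it is worth flagging that the non-square situation is exactly what makes that reduction necessary.
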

\begin{proof}

From \cref{eq:powerproj}, we have to compute $\trsp{\varphi}_{\delta-1,0} (\ell)$. 
The claim follows from the transposition principle (\cref{sec:Tdivision}) applied to the successive 
algebraic steps of
the normal form algorithm of \cref{cor:divisionall}, in reverse order. 
The non-algebraic portions of the algorithm involved in multipoint evaluation are treated 
by means of~\cite[Thm.\,7.6]{KU11}. 
The steps of the algorithm are given in the proof of \cref{thm:KU}~\cite[Thm.\,3.1]{KU11}. The four of them that depend on the input $f$ have to be considered, these are 
Steps~\ref{step:S1}, \ref{step:S4}, \ref{step:S5}, and \ref{step:S6}, that we see 
as $\mathbb F_q$-linear maps.
The last step~\ref{step:S6} is reduction modulo~$I$, the transpose is obtained from \cref{prop:trspred}.
Step~\ref{step:S5} is bivariate interpolation, computed by interpolating in~$x$ then in~$y$.  This is transposed 
using two transposed univariate interpolation~\cite{KL88,BLS03}.
Step~\ref{step:S4} is multivariate evaluation using~\cite[Cor.\,4.5]{KU11}. The transpose is given by~\cite[Thm.\,7.6]{KU11} 
when the ambient dimension is equal to the number of evaluation points, i.e. the linear map can be represented by a square matrix. The general case in which we are, with a larger number of evaluation points, is treated as in the proof of~\cite[Thm.\,7.7]{KU11} using several instances of the square case with a cost that fits the claimed bound. Finally, the transpose of the 
inverse Kronecker substitution at Step~\ref{step:S1} is a projection that takes linear time. 
In view of the transposition principle and of ~\cite[Thm.\,7.6]{KU11}, the algorithm obtained from those transpositions 
computes the power projections using $O((de)^{1+\epsilon}\log(q) ^{1+o(1)})$ bit operations, as in \cref{cor:divisionall}.
\qed
\end{proof}

%
%

\section{Non-singular leading matrices using random shifts and reversals} \label{sec:shifts}

In order to exploit the powers projections of \cref{cor:powerproj} for a minimal polynomial computation, 
and derive the last invariant factor of the Smith normal form of $\Sy$, we need to address a column reducedness issue. This is what we do in this section. If the input polynomials $a$ and $b$ lead to 
 $\Sx$ and $\Sy$ with singular leading coefficients, then we construct two new polynomials $a'$ and $b'$
 which allow to get around the difficulty.

\begin{lemma}[Conditioning of \Sx] \label{lem:condSx}
Given $\alpha\in \K$ not a root of $\res_x(a,b)$ (the ideal is zero-dimensional), in arithmetic time $\softO{de}$ we can compute 
two polynomials $a'$ and $b'$ with degrees as those of $a$ and $b$, such that the new Sylvester 
matrix $\Sxp$ is column reduced and the Smith normal form of $\Syp$ is that of $\Sy$.
\end{lemma}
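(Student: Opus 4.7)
The plan is to take $a'(x,y) := y^{e_a}\tilde a(x,1/y)$ and $b'(x,y) := y^{e_b}\tilde b(x,1/y)$, where $\tilde a(x,y) := a(x,y+\alpha)$ and $\tilde b(x,y) := b(x,y+\alpha)$. Shifting by $\alpha$ in $y$ preserves both the $x$- and $y$-degrees automatically, and the subsequent $y$-reversal preserves the $x$-degrees and keeps the $y$-degrees equal to $e_a, e_b$ provided $\tilde a(x,0) = a(x,\alpha)$ and $\tilde b(x,0) = b(x,\alpha)$ are non-zero as polynomials in $x$. This is guaranteed by $\res_x(a,b)(\alpha) \neq 0$, since the identical vanishing of either polynomial would immediately force $\det \Sx(\alpha) = 0$.

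The column reducedness of $\Sxp$ then follows from \cref{lem:rootsinfty}: by construction the $y$-leading coefficients of $a'$ and $b'$ in $\K[x]$ are $a(x,\alpha)$ and $b(x,\alpha)$, and the non-singularity of $\Sx(\alpha)$ delivers the two conditions of that lemma at once. Indeed, any common $\K[x]$-factor of $a(x,\alpha)$ and $b(x,\alpha)$ would make the columns of $\Sx(\alpha)$ dependent; and if both the $x^{d_a}$-coefficient of $a$ and the $x^{d_b}$-coefficient of $b$ (each a polynomial in $y$) vanished at $\alpha$, the topmost row of $\Sx(\alpha)$ would itself vanish.

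The main obstacle is to verify that the Smith form of $\Syp$ equals that of $\Sy$, which I would split into the shift step and the reversal step. For the shift, the $\K[x]$-linear map $\tau\colon f(y) \mapsto f(y-\alpha)$ is an automorphism of $\K[x][y]_{<n_y}$ sending $\tilde a$ back to $a$ and $\tilde b$ back to $b$, so it carries the column $\K[x]$-module of $\widetilde{\Sy}$ bijectively onto that of $\Sy$; the cokernels are therefore isomorphic as $\K[x]$-modules and yield identical invariant factors. For the reversal, let $J$ denote the reverse-permutation matrix in $\K^{n_y\times n_y}$; using the identity $a'_k = \tilde a_{e_a - k}$, a direct coefficient-level check gives $J\, v_y(y^i\tilde a) = v_y(y^{e_b-1-i}\, a')$, and similarly for $b$, so that $J\, \widetilde{\Sy} = \Syp\, P$ for a suitable column-permutation matrix $P$. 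Since $J$ and $P$ are unimodular over $\K[x]$, $\widetilde{\Sy}$ and $\Syp$ are equivalent and share their Smith form.

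For the cost, the bivariate Taylor shift $a(x,y) \mapsto a(x,y+\alpha)$ can be computed in $\softO{de}$ operations by applying a fast univariate Taylor shift in $y$ to each of the $O(d)$ coefficients of $a$ viewed in $\K[x][y]$, and similarly for $b$; the $y$-reversal itself is only a linear-time permutation of coefficients.
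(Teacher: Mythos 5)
Your proposal is correct and follows essentially the same route as the paper's proof: shift in $y$ by $\alpha$ (so that $\Sxk{1}(0)=\Sx(\alpha)$ is non-singular), then reverse in $y$ with respect to $e_a$ and $e_b$. The only differences are cosmetic\,---\,you phrase the Smith-form invariance of the shift step in terms of a $\K[x]$-module automorphism and the cokernel, whereas the paper writes the explicit unimodular identity $\Syk{1}=Q_{\alpha,n_y}\,\Sy\,\mathrm{diag}(Q^{-1}_{\alpha,e_b},Q^{-1}_{\alpha,e_a})$, and you route the column reducedness of $\Sxp$ through \cref{lem:rootsinfty} while the paper observes directly that the leading matrix of $\Sxp$ is $\Sx(\alpha)$; your reversal identity $J\,\widetilde{\Sy}=\Syp P$ is exactly the paper's \cref{eq:Sy2}.
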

\begin{proof}
Consider  $a^{(1)}(x,y)=a(x,y+\alpha)$ and $b^{(1)}(x,y)=b(x,y+\alpha)$.
The new Sylvester matrix  $\Sxk{1}$ with respect to $x$ has a non-singular constant term since $(\res_x(a,b))(\alpha)\neq 0$. 
The Smith normal form of $\Syk{1}$ is equal to the Smith normal form of $\Sy$. Indeed, 
let $Q_{\alpha,k}\in \K^{k\times k}$ be the matrix of the endomorphism that shifts a polynomial 
of degree less than $k$ by $\alpha$; $Q_{\alpha,k}$ is lower triangular with unit diagonal.  We 
have
\begin{equation}\label{eq:Sy1}
\Syk{1} = Q_{\alpha,e_a+e_b} \, \Sy \, {\text{\rm diag}}(Q_{\alpha,e_b}^{-1},Q_{\alpha,e_a}^{-1}), 
\end{equation}
hence $\Syk{1}$ and $\Sy$ are unimodularly equivalent. 
 Then we consider the reversed polynomials $a'$ and $b'$ of $a^{(1)}$ and $b^{(1)}$
 with respect to $y$, using the respective degrees $e_a$ and $e_b$. 
 Note that $a'$ and $b'$ must keep the same $y$-degrees, otherwise~$\Sxk{1}$ could not have a non-singular constant coefficient; for the same reason, the new matrix $\Sxp$ associated to  $a'$ and $b'$ is column reduced.
On the other hand, the Smith form with respect to $y$ is unchanged since 
\begin{equation}\label{eq:Sy2}
\Syp = J_{e_a+e_b} \, \Syk{1} \, {\text{\rm diag}}(J_{e_b},J_{e_a}),
\end{equation}
where $J_k$ is the reversal matrix of dimension $k$. The cost is dominated by the one of at most $2(d+1)$ 
shifts of polynomials of degree at most~$e$ in $\K[y]$, see e.g. \cite{BiPa94}[Chap. 1, Pb. 3.5]. 
\qed
\end{proof}

Note that \cref{lem:condSx} preserves the Smith normal form of $\Sy$ but not necessarily its Hermite form.  From \cref{lem:rootsinfty}, $a'=b'=0$ has no roots at infinity with respect to $y$,  
so the last invariant factor of $\Sy$ is the minimal polynomial of the multiplication by $x$ in the new quotient algebra (\cref{lem:laz}). 
The latter may have changed, with an extra factor coming from 
possible roots at infinity for $a=b=0$.  

We now do the same type of manipulation for the column reducedness of $\Sy$ and need a preliminary 
observation on reversed polynomial matrices. The reversal by columns of a matrix polynomial is 
the matrix whose entries are reversed with respect to the degree of their column.

\begin{lemma} [Reversed Smith normal form] \label{lem:smithrev}
The last invariant factor of the reversal of $A\in \K[x]^{n\times n}$ by columns
is the reversal of the last invariant factor of $A$ made monic and  multiplied by some power of $x$. 
\end{lemma}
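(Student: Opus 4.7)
The plan is to pass through the determinantal divisors $d_k$ of $A$ and of its column reversal $B$. Writing $B = A(1/x) \cdot \mathrm{diag}(x^{d_1}, \ldots, x^{d_n})$, where $d_j$ is the degree of the $j$-th column of $A$, a direct computation shows that for any size-$k$ row and column subsets $I, J$, the corresponding minor of $B$ equals $x^{\sigma_J} M_{IJ}(1/x)$ with $\sigma_J = \sum_{j \in J} d_j$ and $M_{IJ}$ the $(I,J)$-minor of $A$. When $M_{IJ}$ is nonzero this factors as $x^{e_{IJ}} \rev(M_{IJ})$, with $e_{IJ} = \sigma_J - \deg M_{IJ} \geq 0$ and $\rev$ the true-degree reversal; the latter is coprime to $x$ since its constant term equals the leading coefficient of $M_{IJ}$.

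Taking the monic $\gcd$ over all such $k$-minors and isolating the $x$-factor, which is legitimate because each reversed minor is coprime to $x$, yields $d_k(B) = x^{\alpha_k} \cdot \gcd_{(I,J)}\rev(M_{IJ})$ with $\alpha_k$ the smallest $e_{IJ}$ among nonzero minors. The key step is the identity $\gcd_{(I,J)}\rev(M_{IJ}) = \rev(d_k(A))$ after monic normalization. I would establish it by passing to the Laurent polynomial ring $\K[x,x^{-1}]$, in which $\rev(p)$ and $p(1/x)$ are associates and $x \mapsto 1/x$ is a ring automorphism, so both sides agree as principal ideals up to units $c x^m$; since each side is a polynomial in $\K[x]$ with nonzero constant term, the unit reduces to a scalar in $\K^\times$, removed upon taking the monic representative. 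Combined with the multiplicativity $\rev(fg) = \rev(f)\rev(g)$ applied to $d_n(A) = s_n(A) d_{n-1}(A)$, the quotient $d_n(B)/d_{n-1}(B)$ becomes $s_n(B) = x^{\alpha_n - \alpha_{n-1}} \cdot \rev(s_n(A))$ up to scalar; the scalar disappears after monic normalization, and the exponent is nonnegative because the divisibility $d_{n-1}(B) \mid d_n(B)$ together with $\rev(d_{n-1}(A)) \mid \rev(d_n(A))$ forces the $x$-parts to be ordered.

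The main obstacle I anticipate is the commutation of $\gcd$ with reversal. The naive reversal $p \mapsto x^{\deg p} p(1/x)$ is not an involution on polynomials having $x$ as a factor, so the argument must either strip $x$-parts first or be carried out inside $\K[x, x^{-1}]$ as above. Restricting attention to reversals of nonzero minors, which are automatically $x$-free, is exactly what makes the separation of the $x$-factor from the reversed part clean and turns the Laurent-ring identification into an equality of monic polynomials in $\K[x]$.
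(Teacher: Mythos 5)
Your proof is correct, and it takes a genuinely different route from the paper's. The paper works entirely at the matrix level: it writes the column reversal as $R = A(1/x)X$ for a diagonal $X$ with $x$-power determinant, conjugates the identity $AV = U\SA$ through $x\mapsto 1/x$, clears denominators by powers of $x$ to arrive at relations $RW_1 = W_2\SAe$ and then $\SRe W_3 = W_4\SAe$ with all the $W_i$ having $x$-power determinants, and finally invokes the multiplicativity theorem for Smith normal forms (Newman, Ch.~II Thm.~2.15) to conclude $\SRe = \SAe$. You instead descend to determinantal divisors: you compute each $k$-minor of the reversed matrix explicitly as $x^{e_{IJ}}\rev(M_{IJ})$, isolate the $x$-power from the $x$-free part across the gcd, reduce the commutation of gcd with reversal to an argument in the Laurent ring $\K[x,x^{-1}]$ (where $p$ and $\rev(p)$ are associates), and then take the ratio $d_n/d_{n-1}$ using $\rev(fg)=\rev(f)\rev(g)$. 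This is more elementary and self-contained\,---\,it avoids the multiplicativity theorem entirely\,---\,and it is also more explicit: you pin down the stray $x$-power as $x^{\alpha_n-\alpha_{n-1}}$ with the $\alpha_k$ determined by the minors, whereas the paper leaves it as ``some power of $x$''. The price is a bit more bookkeeping with minors and valuations, and you correctly flag the one delicate point (reversal is not an involution across $x$-divisible polynomials), which your restriction to nonzero minors and the Laurent-ring detour resolves cleanly. Both proofs establish the lemma; neither subsumes the other, but yours would read as a legitimate alternative proof and would in fact generalize without change to give the analogous statement for every invariant factor $s_k$, not only the last.
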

\begin{proof}
Let $X$ in $\K[x]^{n\times n}$ with a determinant which is a power of $x$ be such that the reversal $R$
of $A$ by columns is~$A(1/x)X$. 
Let $\SA$ be the Smith normal form of $A$, with unimodular matrices $U$ and $V$ such that $AV=U\SA$.
 We have 
\begin{equation}\label{eq:tmprev}
R X^{-1} V(1/x) = U(1/x)\SA(1/x).
\end{equation}
Let $\SAe$ be the diagonal matrix whose entries are the reversals of the diagonal entries of $S$, made monic by division by their leading coefficients. By multiplying \cref{eq:tmprev}  by an appropriate power of $x$,
we obtain
$$
R W_1 = W_2 \SAe
$$
for two  matrices $W_1$ and $W_2$ in $\K[x]^{n\times n}$ whose determinants are powers of $x$. 
Now let  
$\SRe$ be the diagonal matrix whose diagonal entries are the invariant factors of $R$ divided 
by the largest power of $x$ they contain. Using similar manipulations as above we get 
$$
\SRe W_3 = W_4 \SAe
$$
for two  matrices $W_3$ and $W_4$ in $\K[x]^{n\times n}$ whose determinants are also powers of $x$.
By the multiplicativity of the Smith normal form~\cite[Ch. II, Thm.\,2.15]{New72}, noting that 
$S^*_A$ and $S^*_R$ are themselves in Smith normal form, we arrive at $\SRe=\SAe$.
The claim follows since the last invariant factor of $R$ is the one of $\SRe$ multiplied by some 
power of $x$, and the last invariant factor of $\SAe$ is the reversal of the last invariant factor of $A$ divided by its leading coefficient.
\qed
\end{proof}

\begin{lemma}[Conditioning of \Sx \, and \Sy] \label{lem:SxSy}
Given $\alpha,\beta\in \K$ not  roots of $\res_x(a,b)$ and $\res_y(a,b)\in \K[x]$, respectively (the ideal is zero-dimensional),  in arithmetic time $\softO{de}$ we can compute 
two polynomials $a'$ and $b'$ with degrees as those of $a$ and $b$ such that: the new Sylvester matrices  $\Sxp$ and $\Syp$ are column reduced;  the last invariant factor of 
$\Sy$ can be deduced from that of $\Syp$ using  $\softO{de}$ additional arithmetic operations. 
\end{lemma}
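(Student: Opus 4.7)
My plan is to iterate the construction of \cref{lem:condSx} in both variables and verify that the second application does not undo the first. I would first invoke \cref{lem:condSx} with the shift $\alpha$, obtaining polynomials $a_1, b_1$ of the same degrees as $a, b$, with the Sylvester matrix of $(a_1, b_1)$ with respect to $x$ column reduced and the Smith normal form of the Sylvester matrix of $(a_1, b_1)$ with respect to $y$ equal to that of $\Sy$. I would then apply the mirror construction in $x$: set $a_2(x, y) = a_1(x + \beta, y)$ and $b_2(x, y) = b_1(x + \beta, y)$, and define $a' = \rev_x a_2$, $b' = \rev_x b_2$ with respect to $d_a$ and $d_b$ respectively.

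By symmetry with the argument of \cref{lem:condSx}, $\Syp$ is column reduced provided the constant term $\Sy(a_1, b_1)(\beta)$ of $\Sy(a_2, b_2)(x) = \Sy(a_1, b_1)(x + \beta)$ is non-singular. Its determinant is $\res_y(a_1, b_1)(\beta)$; the $y$-shift and $y$-reversal producing $(a_1, b_1)$ only rescale $\res_y$ by a non-zero element of $\K$, so this equals $\pm \res_y(a, b)(\beta) \neq 0$ by hypothesis. The same non-vanishing of $a_1(\beta, y)$ and $b_1(\beta, y)$ ensures $\deg_x a' = d_a$ and $\deg_x b' = d_b$; the $y$-degrees are preserved by the $x$-operations, so $a', b'$ have the degrees of $a, b$.

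The heart of the argument, and the main obstacle, is to verify that $\Sxp$ is still column reduced after the $x$-operations. The substitution $x \mapsto x + \beta$ acts on the monomial basis of $\K[x]_{<n}$ by a unipotent upper-triangular matrix $M_\beta^{(n)}$, so $\Sx(a_2, b_2) = M_\beta^{(n_x)} \cdot \Sx(a_1, b_1) \cdot \mathrm{diag}(M_\beta^{(d_b)}, M_\beta^{(d_a)})^{-1}$, a two-sided multiplication by invertible $\K$-constant matrices. The two column blocks of $\Sx(a_1, b_1)$ have uniform $y$-degrees $e_a$ and $e_b$, and the right factor is block-diagonal aligned with these blocks, so the $y$-leading coefficient matrix transforms by the same invertible matrices and remains non-singular. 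The subsequent $x$-reversal of $(a_2, b_2)$ corresponds, after rearranging the $x$-coefficient vectors, to left multiplication of $\Sx(a_2, b_2)$ by the row-reversal matrix $J_{n_x}$ and right multiplication by a permutation that reverses each of the two column blocks, both preserving column $y$-degrees and the non-singularity of the leading matrix.

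For the last invariant factor of $\Sy$, I would retrace the chain $\Sy \to \Sy(a_1, b_1) \to \Sy(a_2, b_2) \to \Syp$. The first step preserves invariant factors by \cref{lem:condSx}. The second substitutes $x \mapsto x + \beta$ in each invariant factor, since $\Sy(a_2, b_2)(x) = \Sy(a_1, b_1)(x + \beta)$. The third is the column-reversal of $\Sy(a_2, b_2)$, so by \cref{lem:smithrev} the last invariant factor of $\Syp$ is a power of $x$ times the monic normalization of the reversal of the previous one. Inverting this chain, namely extraction of the largest $x$-power dividing, reversal and normalization to monic, and substitution $x \mapsto x - \beta$, takes $\softO{de}$ arithmetic operations, and the construction of $(a', b')$ itself is dominated by the two bivariate shifts (by $\alpha$ in $y$ and by $\beta$ in $x$), each $\softO{de}$ as in \cref{lem:condSx}.
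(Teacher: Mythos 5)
Your proposal is correct and follows essentially the same route as the paper's proof: first condition $\Sx$ via \cref{lem:condSx}, then apply the mirror construction in $x$ (shift by $\beta$ and reversal), argue that the $x$-shift and $x$-reversal act on $\Sx$ by two-sided multiplication by constant invertible matrices whose block structure matches the column-degree partition, and finally retrace the transformation chain on the last invariant factor via \cref{lem:smithrev}. The only cosmetic difference is in the inversion step: you propose to extract the largest power of $x$ before reversing, while the paper observes directly that since $\sigma_\beta(0)\neq 0$ the reversal of $c\,x^l\sigma_\beta'$ (taken with respect to its degree) is already $c\,\sigma_\beta$, making the extraction implicit; both are valid and cost $\softO{de}$.
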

\begin{proof}
By applying \cref{lem:condSx} we can assume that $a$ and $b$ are such that $\Sx$ is column reduced,
without modifying~$\Sy$ so $\res_y(a,b)$ either.
We use arguments similar to those used in the proof of \cref{lem:condSx}.
We first take $a^{(1)}(x,y)=a(x+\beta,y)$ and $b^{(1)}(x,y)=b(x+\beta,y)$.
The new Sylvester matrix  $\Syk{1}$ with respect to $y$ has a non-singular constant term since 
$(\res_y(a,b))(\beta)\neq 0$. 
We denote the last invariant factor of $\Sy$ by $\sigma \in \K[x]$.
The last invariant factor of $\Syk{1}$ is  $\sigma_\beta=\sigma(x+\beta)$ and satisfies 
 $\sigma_\beta(0)\neq 0$.
Then we consider the reversed polynomials $a'$ and $b'$ of $a^{(1)}$ and $b^{(1)}$
 with respect to $x$, using the respective degrees $d_a$ and $d_b$. 
 Since $\Syk{1}$ has a non-singular constant term, 
 $a'$ and $b'$ keep the same $x$-degrees and the new matrix $\Syp$ associated to  $a'$ and $b'$ is column reduced. 

We now prove the claims with $a'$ and $b'$.
We have just seen for the column reducedness of the $y$-Sylvester matrix. 
With respect to $x$, the Sylvester matrix is column reduced after the initial application of \cref{lem:condSx}.
Using \cref{eq:Sy1,eq:Sy2} from the proof of the latter lemma, now with $\beta$, $\Sxk{1}$, and $\Sxp$, 
we deduce that $\Sxp$ remains column reduced. 
Finally, we compute the last invariant factor $\sigma$ of $\Sy$ from the one of $\Syp$.
The Sylvester matrix $\Syp$ is the reversal of~$\Syk{1}$.  
Let $\sigma_\beta'$ the reversal polynomial of $\sigma_\beta$. From \cref{lem:smithrev} we deduce that  
the last invariant factor of $\Syp$ is $c x^l \sigma_\beta'$ for some integer $l\geq 0$, and 
a non-zero $c \in \K$. Since $\sigma_\beta(0)\neq 0$, the reversal of $c x^l \sigma_\beta'$ is $c \sigma_\beta$. Using  a shift by $-\beta$ and making the polynomial monic provides us with $\sigma$.

In addition to the cost in \cref{lem:condSx}, 
we essentially have to perform at most $2(e+1)$ 
shifts of polynomials of degree at most~$d$ in~$\K[x]$,  
plus a final shift of a polynomial of degree $O(de)$, see e.g. \cite{BiPa94}[Chap. 1, Pb. 3.5].
\qed
\end{proof}

%
%

\section{Invariant factor computation} \label{sec:invfact}

For an appropriate random linear form $\ell$, the minimal polynomial $\mu$ of 
the multiplication by $x$ in $\K[x,y]/\langle a,b\rangle$ is also 
the one of the linearly generated sequence 
$(\ell \circ \varphi)({x^{i})}_{i\geq 0}$ with high probability [\citealp[Sec.\,4]{Shoup95}; \citealp[Lem.\,6]{KaSh98}]. In essence, this minimal polynomial approach is a transcription of that of  Wiedemann~\cite{Wie86},  
with multiplication matrices rather than sparse ones~\cite{Shoup94}.
This allows, in this section, to first bound the complexity of the minimal polynomial problem 
from the power projection complexity bound we have obtained previously (\cref{cor:powerproj}).
Since $\mu$ has degree at most $2de$, it can ideed be computed 
from the first $4de$ terms of the power projection sequence. 
However, this is only valid when the involved Sylvester matrices are column reduced.
Up to random shifts and reversals (\cref{sec:shifts}), we then describe how the last invariant factor of $\Sy$ can be derived from the minimal polynomial of the multiplication by $x$ in a slightly modified quotient algebra.

\begin{theorem} \label{thm:1}
Consider two polynomials $a,b \in \mathbb F_q[x,y]_{\leq (d, e)}$ and 
assume that the associated Sylvester matrices $\Sx$ and~$\Sy$  are column reduced.
For every constant $\epsilon>0$, 
if {$q\geq \delta^{1+\epsilon}$} with $\delta=4de$,  
there exists a randomized Monte Carlo algorithm which computes the 
minimal polynomial of the multiplication by $x$ in $\mathbb F_q[x,y]/\langle a,b\rangle$
using  $O((de)^{1+\epsilon}\log(q) ^{1+o(1)})$ bit operations. The algorithm returns 
a divisor of the minimal polynomial, to which it is equal with probability at least~$1-2de/q\geq1/2$.
\end{theorem}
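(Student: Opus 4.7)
The plan is to follow the Wiedemann--Shoup paradigm: reduce the computation of $\mu$ to that of the minimal polynomial of a $\mathbb F_q$-linearly generated sequence $\{\ell(\varphi(x^i))\}_{i\geq 0}$, where $\ell$ is a random linear form and $\varphi$ is the normal form map of \cref{prop:division}. The column reducedness hypothesis is exactly what makes $\varphi$ well defined, so that $\Al$ is concretely represented inside $\mathbb F_q[x,y]_{<(d,n_y)}$. Since $\dim_{\mathbb F_q}\Al\leq d_a e_b+d_be_a\leq 2de$, the minimal polynomial $\mu$ of the multiplication-by-$x$ endomorphism has degree at most $2de$, and therefore its coefficients are determined by any $4de$ consecutive terms of such a linearly recurrent sequence.

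More precisely, I would proceed as follows. First, draw $\ell$ uniformly at random in $\widehat{\mathcal V}$, which requires $O(de)$ random elements of $\mathbb F_q$. Second, invoke \cref{cor:powerproj} once with this $\ell$ to compute the $\delta=4de$ projections $(\ell\circ\varphi)(x^i)$ for $0\leq i<\delta$, in $O((de)^{1+\epsilon}\log(q)^{1+o(1)})$ bit operations. Third, run the Berlekamp--Massey algorithm on this sequence of length $4de$ to recover its minimal generator $\tilde\mu\in\mathbb F_q[x]$; this last step costs $\softO{de}$ operations in $\mathbb F_q$ by fast extended Euclidean algorithm on univariate polynomials, which fits in the target bit cost. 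Return $\tilde\mu$.

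For correctness, one checks that for any $\ell$ the sequence $\{\ell(\varphi(x^i))\}_{i\geq 0}$ is annihilated by $\mu$, so its minimal generator $\tilde\mu$ always divides~$\mu$; that is the \emph{Monte Carlo} guarantee of returning a divisor. The equality $\tilde\mu=\mu$ follows from the standard genericity lemma \cite[Sec.\,4]{Shoup95}, \cite[Lem.\,6]{KaSh98}: viewing the multiplication-by-$x$ map as an endomorphism of the $\mathbb F_q$-vector space underlying~$\Al$ (of dimension $N\leq 2de$), a uniformly random $\ell$ yields $\tilde\mu=\mu$ with probability at least $1-\deg\mu/q\geq 1-2de/q$. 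The hypothesis $q\geq\delta^{1+\epsilon}=(4de)^{1+\epsilon}$ makes this at least $1/2$ for $de$ large enough.

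The heavy lifting has already been done in \cref{cor:powerproj}; there is no additional technical obstacle here beyond assembling the three ingredients (random sampling, power projection, Berlekamp--Massey) and checking that the Kaltofen--Shoup probability estimate applies to the endomorphism of $\Al$ presented via~$\varphi$. The total bit complexity is dominated by the single call to \cref{cor:powerproj}, giving the claimed $O((de)^{1+\epsilon}\log(q)^{1+o(1)})$ bound.
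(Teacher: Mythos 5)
Your proposal follows exactly the paper's approach: draw a random linear form $\ell$, compute the $4de$ power projections $(\ell\circ\varphi)(x^i)$ via \cref{cor:powerproj}, recover the minimal polynomial of the resulting linearly recurrent sequence by fast Euclidean algorithm, and invoke the standard Wiedemann-type probability bound $1-\deg\mu/q$ for a random projection. The only cosmetic difference is that the paper makes the matrix/vector formulation explicit (introducing $\psi$, $M$, $u$ on the ambient space $\mathbb F_q[x,y]_{<(d,n_y)}$) before appealing to the probability lemma, whereas you phrase it directly in terms of the endomorphism of $\Al$; both are valid and rely on the same underlying lemma.
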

\begin{proof}
The modular power projections as in \cref{cor:powerproj} are computed for a random linear map $\ell$. 
The sequence $\{(\ell \circ \varphi)({x^{i}})\}_{i\geq 0}$ is linearly generated; its minimal polynomial $\mu'$ is a divisor of the minimal polynomial $\mu$ of the multiplication by $x$ in $\Al$. Since $\deg \mu \leq 2de$, $\mu'$ can be computed using $\softO{de}$ additional operations in $\K$ from the $4de$ first terms of the sequence~\cite[Algo\,12.9]{GaGe99}. 
We can conclude by proving that $\mu'=\mu$ with high probability. Following the construction of $\varphi$ in \cref{eq:defmap}, one can define the multiplication map 
\begin{equation} \label{eq:mulmap}
\begin{array}{rl}
\psi: \mathbb F_q[x,y]_{<(d,n_y)} \rightarrow & \mathbb F_q[x,y]_{<(d,n_y)}\\
f \mapsto & xf \rem I.  
\end{array}
\end{equation}
For an appropriate basis of $\mathbb F_q[x,y]_{<(d,n_y)}$ as a $\mathbb F_q$-vector space, we consider that 
$\psi$ is represented by a matrix $M\in \mathbb F_q^{(dn_y) \times (dn_y)}$ and that $1$ is represented by the vector $u \in \mathbb F_q^{dn_y}$.
According to what we have seen in  \cref{sec:Tdivision}, we also represent linear forms in the 
dual of $\mathbb F_q[x,y]_{<(d,n_y)}$ by vectors in $\mathbb F_q^{dn_y}$. 
With this, $\mu$ is the minimal polynomial of  $u$ with respect to $M$. Hence for a random linear form  $\ell$ represented by $v\in \mathbb F_q^{dn_y}$, the minimal polynomial 
of the linearly generated sequence $\{(\ell \circ \varphi)({x^{i}})\}_{i\geq 0}=\{\trsp{v}M^iu\}_{k\geq 0}$ is $\mu$ with probability 
at least $1-\deg\mu/q$ [\citealp[Lem.~2]{KaPa91}; \citealp[Lem.~1]{KaSa91}]. 
\qed
\end{proof}

\begin{corollary} \label{cor:invfact}
Consider two coprime polynomials $a,b \in \mathbb F_q[x,y]_{\leq (d,e)}$.
For every constant $\epsilon>0$,  there exists a randomized Monte Carlo algorithm which computes the last invariant factor of the Sylvester matrix associated to $a$ and~$b$ with respect to either~$x$ or $y$,
using 
 $O((de)^{1+\epsilon}\log(q) ^{1+o(1)})$ bit operations. 
The algorithm either returns the  target invariant factor,  and this with probability at least $1/2$, 
one of its divisors, or ``failure''. 
\end{corollary}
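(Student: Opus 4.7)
My plan is to combine the random conditioning of Lemma 5.2 with Theorem 6.1 (which handles the column-reduced case) and Lemma 2.1 to lift the column-reducedness assumption. The conditioning from Lemma 5.2 produces polynomials $a',b'$ with column reduced Sylvester matrices $\Sxp$ and $\Syp$, and by Lemma 2.3 the column reducedness of $\Syp$ forces the $y$-leading coefficients of $a',b'$ to be coprime; Lemma 2.1 then guarantees that the last invariant factor of $\Syp$ coincides with the minimal polynomial $\mu$ of multiplication by $x$ in $\mathbb F_q[x,y]/\langle a',b'\rangle$. This reduces the invariant-factor problem to exactly the setting of Theorem 6.1.

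Concretely, the algorithm proceeds as follows. First, draw $\alpha,\beta$ uniformly at random from $\mathbb F_q$ (working, if needed, in a polylogarithmic extension $\mathbb F_{q^k}$ so that the size condition $q^k\geq (4de)^{1+\epsilon}$ of Theorem 6.1 holds without affecting the bit-complexity target). Apply the shift-then-reverse conditioning of Lemma 5.2 in time $\softO{de}$ to obtain $a',b'$. Verify column reducedness of $\Sxp$ and $\Syp$ by checking that $a',b'$ retained their original degrees in both variables and by testing the leading-coefficient criterion of Lemma 2.3; if either test fails, output ``failure''. Next invoke Theorem 6.1 on $a',b'$ with a random linear form $\ell$: this returns, in $O((de)^{1+\epsilon}\log(q)^{1+o(1)})$ bit operations, a divisor $\tilde\mu$ of $\mu$, equal to $\mu$ with probability at least $1-2de/q$. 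Finally, undo the conditioning as in the proof of Lemma 5.2: apply the $x$-reversal (removing any trailing $x$-power coming from the reversed-Smith-form analysis of Lemma 5.3), shift the result by $-\beta$, and make it monic; this transforms $\tilde\mu$ into a divisor $\tilde\sigma$ of the last invariant factor $\sigma$ of $\Sy$, with equality exactly when $\tilde\mu=\mu$. This post-processing costs $\softO{de}$ additional operations.

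The correctness outputs then split into the three cases required: when all random choices succeed we recover $\sigma$ exactly; when the conditioning succeeds but the linear form in Theorem 6.1 produces a proper divisor of $\mu$, the output $\tilde\sigma$ is a proper divisor of $\sigma$; when the conditioning fails (detected by the column-reducedness check) the algorithm reports ``failure''. The conditioning succeeds unless $\alpha$ is a root of $\res_x(a,b)$ or $\beta$ of $\res_y(a,b)$, each event of probability at most $2de/q$, so the overall success probability is at least $1-O(de/q)\geq 1/2$ under the size hypothesis on the effective working field. The main subtlety I anticipate is arranging all three random failure modes (two from conditioning, one from the power-projection Monte Carlo) to be simultaneously detectable or simultaneously bounded---and ensuring that the reversal inversion of Lemma 5.3 correctly transports divisors (not merely the full last invariant factor) through the reversal-and-shift map, which amounts to observing that the multiplicative structure of the Smith form is preserved under these operations on the ideal generators.
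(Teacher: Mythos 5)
Your proposal is correct and follows essentially the same route as the paper's proof: random shift-and-reverse conditioning via \cref{lem:SxSy}, a column-reducedness check as the detectable failure mode, invocation of \cref{thm:1}, identification of the minimal polynomial with the last invariant factor via \cref{lem:laz}, inverse conditioning to recover the target invariant factor, and a field extension when $q$ is too small. The only small difference is that your stated lower bound on the working-field cardinality uses the $4de$ from \cref{thm:1} alone; the paper takes $(12de)^{1+\epsilon}$ so that the combined failure probability $4de/q + 2de/q$ is at most $1/2$, but you note the combined bound informally so the argument still goes through.
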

\begin{proof}
When  $q\geq (12de)^{1+\epsilon}$, we randomly choose random $\alpha$ and $\beta$ in $\mathbb F _q$, 
then check whether $\Sxp$ and $\Syp$ as in \cref{lem:SxSy} are column reduced. The check is performed
using $\softO{d+e}$ operations, see \cref{lem:laz} and e.g. \cite[Thm.\,11.10]{GaGe99}. 
Since $\res_y(a,b)\in \mathbb F_q[x]$ and $\res_x(a,b)\in \mathbb F_q[y]$ have degree at most $2de$, 
 the probability of success is at least  $1-4de/q$.
If the Sylvester matrices are column reduced, from \cref{thm:1}, we then compute the minimal polynomial of the multiplication by $x$ (or $y$) in the 
quotient algebra associated to $\Syp$ (or $\Sxp$). 
\Cref{lem:laz} tells us that we have actually computed the last invariant factor of $\Syp$ (or $\Sxp$)
with probability at least $1-2de/q$.
From \cref{lem:SxSy} again, we finally derive the last invariant factor of~$\Sy$~(or $\Sx$).
If $q$ is too small, we construct an extension field of $\mathbb F_q$ with cardinality at least 
$(12de)^{1+\epsilon}$, that is of  degree~$O(\log (de))$. This can be done using an expected number of 
$\softO{(\log (de)^2 + \log (de)\log (q)}$ bit operations~\cite{Shoup94} (see also~\cite{CouveignesLercier2013} and \cite[Sec.\,14.9]{GaGe99} in this regard).  
We then work in this extension,  the costs induced are logarithmic factors which 
do not change our target cost bound, and the probability of success can be adjusted. 
\qed
\end{proof}

%
%

\section{Elimination ideal and resultant} \label{sec:resultant}

When the system $a=b=0$ has no roots at infinity with respect to $y$, from \cref{lem:laz} and \cref{cor:invfact} we obtain a Monte Carlo algorithm for computing 
the minimal polynomial of the multiplication by $x$, it is a {generator of the elimination ideal} $\langle a, b\rangle \cap \mathbb F_q[x]$. 

Still with the absence of roots at infinity with respect to $y$, \cref{lem:shape} indicates that if, moreover, the ideal 
has a shape basis $I=\langle \mu(x), y-\lambda(x)\rangle$~\cite{GTZ88,BMMT94}, then the {resultant} of $a$ and~$b$ is known. 
Note  that the extra non-zero constant in \cref{lem:shape} can be computed
at the cost of $\softO{de}$ operations in $\mathbb F_q$ using evaluation in $x$. 

For the resultant, we see that this leads to a weaker genericity assumption than in \cite{HoeLec21a}, where the total 
degree is used.  Assume that the ideal~$\langle a,b\rangle$ is in generic position for the lexicographic order 
$y>x$ so that   $res_y(a,b)=c \mu$ with~$c\neq 0\in \mathbb F_q$. 
In this case, from \cref{lem:shape} and \cref{lem:laz} again, we can compute the resultant as the last invariant factor 
of $\Sy$ without the use of 
  an additional 
condition with respect to the graded reverse lexicographic order~\cite{HoeLec21a}. 
This further allows us to deal with more general situations than that of the total degree since 
 we obtain the resultant in all cases where $\res_y(a,b)=c \mu$. This condition is sufficient but not necessary (\cref{example1,ex:notlemlaz7}), the resultant can be computed when 
$\Sy$ has a unique non-trivial invariant factor. Note that the  
latter property can be formalized in the Zariski sense, for example by relying on  
ideals in general position with no roots at infinity~\cite[Sec.\,3.5]{CoLiOSh05}. More precisely, there exists a non-zero polynomial $\Phi$
in $2(d+1)(e+1)$ variables over $\K$, such that the Smith form of $\Sy$ has a 
unique non-trivial invariant factor if $\Phi$ does not vanish at the coefficients of $a$ and $b$.

The generic resultant algorithm becomes of the Las Vegas type when the degree of the resultant is known in advance, 
especially if the Sylvester matrix $\Sy$ is column reduced. In the latter case the degree of the 
resultant is indeed the sum of the column degrees of $\Sy$ \cite[Eq. (24), p. 385]{Kailath80}.

%
%

\bibliographystyle{abbrvurl}

\vspace*{1cm}

{\small 
\renewcommand\bibname{References}
\let\clearpage\relax

\bibliography{ms}
}

\end{document}